\newtheorem{theorem}{Theorem}
\newtheorem{lemma}{Lemma}
\newtheorem{corollary}{Corollary}
\newtheorem{remark}{Remark}
\newcommand{\x}{{\bf x}}
\newcommand{\y}{{\bf y}}
\newcommand{\z}{{\bf z}}
\newcommand{\w}{{\bf w}}
\begin{document}

\title{Detecting the large entries of a sparse covariance matrix in sub-quadratic time}

\author{{\sc Ofer Shwartz} \\[2pt]
{\sc and}\\[6pt]
{\sc Boaz Nadler} \\[2pt]
Department of Computer Science\\
Weizmann Institute of Science\\
Rehovot, Israel}

\maketitle

\begin{abstract}
{The covariance matrix of a $p$-dimensional random variable is a fundamental quantity in data analysis. Given \(n\) i.i.d. observations, it is typically estimated by the sample covariance matrix, at a computational cost of \(O(np^{2})\) operations. When \(n,p\) are large, this computation may be prohibitively slow. Moreover, in several contemporary applications, the population matrix is approximately sparse, and only its few large entries are of interest. This raises the following question, at the focus of our work:\ Assuming approximate sparsity of the covariance matrix, can its large entries be detected much faster, say in sub-quadratic time, without explicitly computing all its \(p^{2}\) entries?\
In this paper, we present and theoretically analyze two randomized algorithms that detect the large entries of an approximately sparse sample covariance matrix using only $O(np\text{ poly log } p)$ operations.
Furthermore, assuming sparsity of the population matrix, we derive sufficient conditions on the underlying random variable and on the number of samples \(n\), for the sample covariance matrix to satisfy our approximate sparsity requirements.
Finally, we illustrate the performance  of our algorithms
via several simulations.}
{sparse covariance matrix, sub-quadratic time complexity, multi-scale group testing.}
\end{abstract}

\section{Introduction}
Let $Z=(Z_1,\ldots,Z_p)$ be a \(p\)-dimensional real valued random variable with a \(p\times p\) covariance matrix $\Sigma$. Given $n$ i.i.d. samples of $Z$, denoted $\{{\bf z}_i\}_{i=1}^n$, a fundamental task in statistical inference is to estimate $\Sigma$. A standard estimator of $\Sigma$ is the sample covariance matrix \(S=\frac{1}{n-1}\sum_i ({\bf z}_i-\bar{\bf z})({\bf z}_i-\bar{\bf z})^T\), where $\bar{\bf z}$ is the sample mean. Direct computation of all $p^2$ entries of $S$ requires $O(np^2)$ operations. In various contemporary data analysis applications, where both \(p\) and $n$ are large, this computation may be prohibitively slow and challenging in terms of memory and storage.

In several applications, however, the population covariance matrix is approximately sparse, whereby only its few large entries are of interest, and the remaining entries are either small or even precisely equal to zero. Applications leading to sparse covariance matrices include, among others, gene arrays and biological networks \cite{butte_2000}, social networks, climate data and f-MRI scans.
 As we are only interested in the large entries, a key question is whether these can be computed significantly faster, possibly by cleverly detecting their locations, without directly computing the entire matrix.

In this paper we present and theoretically analyze two different randomized algorithms with sub-quadratic time complexity, to detect and subsequently compute the large entries of a sparse sample covariance matrix. First, in Section \ref{sec:sfft_based_algo}, we present a reduction of this problem to the sparse-Fast-Fourier-Transform (sFFT) \cite{Nearly_Optimal_sFFT}.
A solution to our task then directly follows by invoking multiple calls to the recently developed randomized sFFT sub-linear time algorithm.
Next, in Section \ref{sec:tree_algo} we present a simpler and more direct algorithm, based on the construction of $O(\log p)$ binary random trees. We prove that under suitable assumptions on the sparsity of the matrix \(S\), both algorithms are guaranteed, with high probability,  to locate its large entries. Furthermore, their runtimes are  $O(nrp\log^3p)$ operations for the sFFT-based algorithm, and \(O(nrp\log^2 p)\) operations for the tree-based method, where $r$ is a bound on the number of large entries in each row of \(S\). By suitable normalization of the input data, both algorithms can also detect large entries of a sparse sample \textit{correlation matrix}, whose entries are  $\frac{S_{ij}}{\sqrt{S_{ii}S_{jj}}}$.

The theoretical analysis of the two algorithms of Sections \ref{sec:sfft_based_algo} and \ref{sec:tree_algo} relies on the assumption that $S$ is approximately sparse. In reality, in various applications one may only assume that the population matrix is approximately sparse. To this end, in Section \ref{sec:num_samples} we provide sufficient conditions on $\Sigma$, the underlying random variable \(Z\) and the number of samples $n$ that ensure, w.h.p., the approximate sparsity of $S$.

Finally, in Section \ref{sec:simulations} we empirically compare the tree-based algorithm with   other methods to detect large entries of $S$. In addition, we illustrate on artificially generated data that includes near duplicates the potential applicability of our algorithm for the \textit{Near Duplicate Detection} problem, common in document
corpora analysis \cite{xiao_2011_PPJOIN}.

\subsection{Related works}

In the statistics literature, the problem of sparse covariance estimation has received significant attention, see for example \cite{Bickel_Levina_2008,Cai_Liu_2011,Karoui_2008,Bien_2011,Chaudhur_07}. As discussed in Sections \ref{sec:sfft_based_algo} and \ref{sec:tree_algo} below, under suitable sparsity assumptions, our algorithms are guaranteed to find with high probability all entries of the sample covariance matrix which are larger, in absolute value, than some threshold \(\mu\). The resulting matrix is then nothing but the sample covariance matrix, hard-thresholded at the value \(\mu\).
The statistical properties of such thresholding were intensively studied, see for example  \cite{Bickel_Levina_2008,Karoui_2008,Cai_Liu_2011} and references therein. The main motivation of these works was to derive a more accurate estimate of the population covariance matrix \(\Sigma\), assuming it is sparse, thus overcoming the asymptotic inconsistency of $S$ in the operator norm, in the joint limit \(p,n\to\infty\) with $p/n\to c$, see for example  \cite{karoui_2008_spectrum}. Moreover, hard-thresholding with a threshold that slowly tends to zero as \(p,n\to\infty\), was proven to be asymptotically minimax optimal under various sparsity models. Our work, in contrast, is concerned with the {\em computational effort} of computing this thresholded estimator, mainly for finite \(p,n\) and relatively large thresholds.

Focusing on the computational aspects, first of all note that the sample covariance matrix $S$ can be represented as a product of two suitable matrices. Hence, using fast matrix multiplication methods, all its entries can be computed faster than \(O(np^{2})\). Currently, the fastest matrix multiplication algorithm for square matrices of size $N\times N$ has a  time-complexity of $O(N^{2.3727})$   \cite{williams_2012_multiplying}. Hence, by expanding (with zero padding)\ the input sample matrix to a square matrix, all entries of $S$ can be exactly evaluated using $O(\max\{n,p\}^{2.3727})$ operations.

 In recent years, several works developed fast methods to \textit{approximate} the product of two matrices, see for example \cite{drineas_2006}, as well as \cite{iwen_spencer_2009} and \cite{pagh_2013} which assume that the product is approximately sparse. In particular, in a setting similar to the one considered in this paper, the method of \cite{pagh_2013}, based on fast approximate matrix multiplication, can detect the large entries of a sparse covariance matrix in time complexity comparable to ours. 

In addition, since the entries of $S$ can be represented as inner products of all-pairs vectors, our problem is directly related to the \textit{Maximum Inner Product Search} (MIPS) problem \cite{ram_2012, shrivastava_2014}. In the MIPS\ problem, given a large dataset \(\{{\bf x}_i\}\),  the goal is to quickly find, for any query vector \({\bf y}\), its maximal inner product $\max_i \langle {\bf y},{\bf x}_i\rangle$. \cite{ram_2012} presented three algorithms to retrieve the maximal inner product, which can be generalized to find the  $k$ largest values. While \cite{ram_2012} did not provide a theoretical analysis of the runtime of their algorithms, empirically on several datasets they were significantly faster than direct computation of all inner products.  Recently, \cite{shrivastava_2014} presented a simple reduction from the \textit{approximate}-MIPS problem, of finding $\max_i \langle {\bf y},{\bf x}_i\rangle$ up to a small distortion \(\epsilon\), to the well-studied $k$ \textit{nearest neighbour}  problem ($k$NN).
This allows to solve the approximate-MIPS problem using any  $k$NN procedure. The   (exact or approximate) $k$NN search problem has been extensively studied in the last decades, with several fast algorithms, see \cite{shakhnarovich_book_2006,osipov_rokhlin_2013,arya_1998} and references therein. For example, $kd$-tree \cite{kd_tree} is a popular exact algorithm for low dimensional data, whereas \textit{Local-sensitive-hashing} (LSH) approximation methods \cite{datar_indyk_2004} are more suitable to high dimensions. Combining the reduction of \cite{shrivastava_2014} with  the  LSH-based algorithm of \cite{peled_indyk_2012} yields an approximate solution of the MIPS problem  with $O(np^\gamma\text{ poly} \log p)$ query time, where $\gamma \in (0,1)$ controls the quality of the approximation.  These
methods can be exploited to approximate the sample covariance matrix, perhaps with no assumptions on the matrix
but with slower (or no) runtime guarantees. In section \ref{sec:simulations}, we empirically compare our sub-quadratic tree-based algorithm to some of the above methods.

Another line of work related to fast estimation of a sparse covariance matrix is the matrix sketching problem  \cite{nowak_2013_sketching}. Here, the goal is to recover an unknown matrix $X$, given only partial knowledge of it, in the form of a linear projection $AXB$ with known matrices $A$ and $B$. The matrix sketching problem is more challenging, since we are given only partial access to the input. In fact, recent solutions to this problem \cite{adler_2013} have time complexity at least $O(np^2)$.

A more closely related problem is the \textit{all-pairs similarity search} with respect to the cosine similarity or the Pearson-correlation. Here, given a set of vectors \(\{\x_i\}\), the task is to find the pair with highest cosine similarity, \(\max_{i\neq j}{\x_i}^t{\x_j}/\|\x_i\| \|\x_j\|\). Two popular exact methods, suitable mainly for sparse inputs, are \textit{All-Pairs} \cite{Bayardo_2007} and \textit{PPJoin+} \cite{xiao_2011_PPJOIN}. As in the nearest-neighbours search problem,  hashing can be adapted to obtain various approximation algorithms \cite{charikar_2002}.
These algorithms can be used to rapidly compute a sparse correlation matrix. In a special case of the all-pairs similarity search, known as  the \textit{Light Bulb Problem} \cite{valiant_1988}, one is given \(p\) boolean vectors all of length \(n\) and taking values $\pm 1$. The assumption is that all vectors are uniformly distributed on the \(n\)-dimensional boolean hypercube, apart from a pair of two vectors which have a Pearson-correlation $\rho\gg\sqrt{\log p/n}$. The question is how fast can one detect this pair of correlated vectors. LSH type methods as well as bucketing coding \cite{dubiner_2010}
solve this problem with a sub-quadratic time complexity of $O(np^{g(\rho)})$, for a suitable function $g(\rho)$ of the correlation coefficient. 
More recently, \cite{valiant_2015} developed a method with expected sub-quadratic time complexity of $O(np^{1.62})$ operations, where the exponent value 1.62 is independent of $\rho$ and directly related to the complexity of the fastest known matrix multiplication algorithm. 
It is easy to show that the methods proposed in our paper can solve the Light Bulb Problem using only $O(np\text{poly}\log p)$ operations,
but assuming a significantly stronger correlation of  $\rho > O(\sqrt{\frac{p}{n}})$.
Furthermore, our methods offer an interesting tradeoff between time complexity and correlation strength \(\rho\). For example, our tree-based algorithm with \(O(p^\alpha)\) trees (instead of \(O(\log p)\)), can detect weaker correlations of strength \(\rho>O(\sqrt{\frac{p^{1-\alpha}}{n}})\) with a time complexity of \(O(np^{1+\alpha}\text{ poly}\log p)\) operations.

\section{Notation and Problem Setup}
\label{sec:notation}
For simplicity, in this paper we assume the input data $(\z_1, \dots, \z_n)$ is real-valued, though the proposed methods can be easily modified to the complex valued case.
For a vector $\x\in\mathbb{R}^n$, we denote its $i$-th entry by $\x_i$ (or $(\x)_i$), its $L_2$-norm by $||\x|| := \sqrt{\sum_{i=1}^n\x_i^2}$, and its $L_0$-norm by $||\x||_0:=| \{ i:\x_i \neq 0\} | $.
 Similarly, for a matrix $A\in\mathbb{R}^{p\times p}$ we denote its $k$-th row by $A_k$, its $(i,j)$-th entry by $A_{ij}$ (or $(A_{i})_{j}$), its $L_2$-norm by $||A||: = \sup_{\x \neq 0}\frac{||A\x||}{||\x||}$, and its Frobenius norm by $||A||_F := \sqrt{\sum_{ij}A_{ij}^2}$. For an integer $a\in\mathbb{N}$, let
$ [a]:=\{1, 2, \dots, a\}$.
To simplify notations, the inner product between two complex vectors $\x,\y\in \mathbb{C}^n$  is defined with a normalization factor,
$$ \langle \x, \y\rangle = \frac{1}{n-1}\sum_{i=1}^n \x_i \y_i ^\dagger$$
where $\dagger$ represents the complex conjugate, and will be relevant only when we discuss the Fourier transform which involves complex valued numbers.

Given the input data $\{\z_1,\dots, \z_n\}$, let $\x_i=((\z_{1})_i - (\bar \z_i),\cdots,(\z_{n})_i - (\bar \z_{n}) )\in\mathbb{R}^n$ be a mean centered vector of observations for the $i$-th coordinate of $Z$. This leads to the simple representation of $S$, in terms of inner products
\begin{equation}
\label{eq:s_rep_ip}
S_{ij}=\langle \x_i, \x_j\rangle, \qquad 1\leq i,j\leq p.
\end{equation}

For a matrix $A\in\mathbb{R}^{p\times p}$ and a threshold parameter $\mu$, we define the set of large entries at level $\mu$ in the $k$-th row to be$$J_\mu(A_{k}) = \{j \ :\ |A_{kj}|\geq\mu\}$$
and the set of all its large entries
at level $\mu$ by
$$ J_\mu(A) = \bigcup_{k=1}^p \{k\}\times J_\mu(A_k).$$
As for the definition of matrix sparsity, we say that a matrix $A$ is $(r,\mu)$\textit{-sparse} if for every row $k$, $|J_{\mu}(A_{k})| \leq r$. We say that $A$ is $(r, \mu, R,q)$-sparse if it is $(r,\mu)$-sparse and for every $k\in[p]$ the remaining small entries in the $k$-th row are contained in a $L_q$-ball of radius $R$,
$$\left(\sum_{j\not\in J_\mu(A_k)}|A_{kj}|^q\right)^{1/q}\leq R.$$
Here we only consider the case where $q=2$ and $R <\mu/2$.

\paragraph{Problem Setup.} Let $\{\z_1,\dots, \z_n\}$ be $n$ input vectors whose covariance matrix \(S\) is $(r,\mu)\)-sparse, with $r\ll p$. In this paper we consider the following task: Given $\{\z_i\}_{i=1}^n$ and the threshold value \(\mu\), find the set  $J_\mu(S)$, which consists of all entries of $S$ which are larger than $\mu$ in absolute value.
A naive approach is to explicitly compute all  $p^2$ entries of \(S\) and then threshold at level $\mu$. This requires $O(np^2)$ operations and may be quite slow when $p \gg1$. In contrast, if an oracle gave us the precise
locations
of all large entries, computing them directly would require only \(O(npr)\) operations.

The key question studied in this paper is whether we can find the set  $J_\mu(S)$, and compute the corresponding entries significantly faster than \(O(np^{2})\). In what follows, we present and analyze two sub-quadratic algorithms to discover $J_\mu(S)$, under the assumption that the matrix $S$ is approximately sparse.

\section{Sparse covariance estimation via sFFT}
\label{sec:sfft_based_algo}
The first solution we present is based on a reduction of our problem to that of  multiple independent instances of sparse Fourier transform calculations. Whereas standard fast-Fourier-transform (FFT) of a vector of length \(p\) requires \(O(p\log p)\) operations, if the result is a-priori known to be approximately sparse, it can be computed in sub-linear time. This problem, known as sparse-FFT (sFFT), has been intensively studied in the past few years, see \cite{Nearly_Optimal_sFFT, akavia_2010, gilbert_2002, iwen_2010}. As described below, the computational gains for our problem then directly follow by an application of one of the available sFFT algorithms.

In what follows we present this reduction, and focusing on the algorithm of  \cite{Nearly_Optimal_sFFT} we analyze under which sparsity conditions on \(S\) it is guaranteed to succeed w.h.p. To this end, we use the following definitions for the discrete Fourier transform (DFT) $\mathcal{F}:\mathbb{C}^p\rightarrow\mathbb{C}^p$
and its inverse $\mathcal{F}^{-1}$,
$$
(\mathcal{F}[\x])_j=\sum_{l=1}^{p}\x_l\omega^{-jl}\qquad
(\mathcal{F}^{-1}[\x])_j=\frac{1}{p}\sum_{l=1}^{p}\x_l\omega^{jl}
$$
where $\omega=\exp(2\pi {\bf i}/p)$, \({\bf i}=\sqrt{-1} \). Without any assumptions on the input $\x$, the fastest known method to compute its DFT is the FFT which requires $O(p\log p)$ operations.
If the vector \(\mathcal F[\x]\) is approximately $r$-sparse, with only $r$ large entries, it is possible to estimate it in \textit{sub-linear time}, by detecting and approximately evaluating only its large entries. In particular, \cite{Nearly_Optimal_sFFT} developed a randomized algorithm to compute the sparse Fourier transform with time complexity of $O(r\log p\log(p/r))$, which we refer to here as the sFFT algorithm.
Formally, for any input vector $\x\in\mathbb{C}^p$ and parameters $\alpha, \delta,r$, the sFFT algorithm returns a sparse vector $\hat \x$ such that, w.p. $\geq 2/3$
\begin{equation}
\label{eq:sfft_algo_res}
||\mathcal{F}[ \x]-\hat \x||\leq (1+\alpha)\min_{||\y||_0\leq r}||\mathcal{F}[\x]-\y|| +\delta||\mathcal{F}[\x]||. \end{equation}
The algorithm time complexity is $O(\frac{r}{\alpha}\log(p/r)\log(p/\delta))$, though for our purposes we assume that $\alpha$ is fixed (e.g. $\alpha=1$) and hence ignore the dependence on it.
The output \(\hat\x\) of the sFFT algorithm is represented as a pair \((J,\y)\) where \(J\subset\{1,\ldots,p\}\) is a set of indices and $\y\in\mathbb{C}^{|J|}$ are the values of $\hat\x$ at these indices, namely $\hat\x|_J=\y$ and \(\hat\x|_{J^c}=0.\)

We now present the reduction from the sparse covariance estimation problem to sparse FFT. To this end,
let us define a matrix $W = ({\bf w}_1, \dots, \w_p)\in \mathbb{C}^{n\times p}$ whose $j$-th column is $\w_j=\frac{1}{p}\sum_{l=1}^{p}\x_l\, \omega^{-jl}$. Note that using standard FFT methods, $W$ can be calculated in $O(np\log p)$ operations.
The following lemma describes the relation between the matrix $S$ and the matrix $W$.

\begin{lemma}\label{lemma:sfft_reduction}
For any $k\in[p]$, let ${\bf u}_k=(\langle \x_k, \w_{1}\rangle,\dots,\langle\x_k, \w_{p}\rangle)\in\mathbb{C}^p$. Then,
\begin{equation}
\label{eq:sfft_red}
\mathcal{F}[{\bf u}_k]=S_{k}.
\end{equation}
\end{lemma}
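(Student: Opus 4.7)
The plan is to unwind the definitions of $\mathbf{u}_k$ and $\mathbf{w}_j$, interchange the order of summation, and recognize the resulting expression as the inverse DFT of the $k$-th row of $S$; applying $\mathcal{F}$ to both sides then yields the claim.

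More concretely, I would first write out the $j$-th coordinate $(\mathbf{u}_k)_j = \langle \mathbf{x}_k, \mathbf{w}_j\rangle$ by substituting the definition $(\mathbf{w}_j)_i = \tfrac{1}{p}\sum_{l=1}^p (\mathbf{x}_l)_i\,\omega^{-jl}$. Because the inner product used in this paper includes complex conjugation on the second argument, and because the data vectors $\mathbf{x}_l$ are real, the factor $\omega^{-jl}$ turns into $\omega^{jl}$ upon conjugation, giving
\[
(\mathbf{u}_k)_j \;=\; \frac{1}{n-1}\sum_{i=1}^{n}(\mathbf{x}_k)_i\cdot\frac{1}{p}\sum_{l=1}^{p}(\mathbf{x}_l)_i\,\omega^{jl}.
\]

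Next, I would swap the finite sums over $i$ and $l$, pulling $\omega^{jl}$ outside the $i$-sum, which produces
\[
(\mathbf{u}_k)_j \;=\; \frac{1}{p}\sum_{l=1}^{p}\omega^{jl}\cdot\frac{1}{n-1}\sum_{i=1}^{n}(\mathbf{x}_k)_i(\mathbf{x}_l)_i \;=\; \frac{1}{p}\sum_{l=1}^{p}\omega^{jl}\,S_{kl},
\]
where the last identity uses the representation \eqref{eq:s_rep_ip}. Comparing with the definition of $\mathcal{F}^{-1}$, this is exactly $(\mathcal{F}^{-1}[S_k])_j$, so $\mathbf{u}_k = \mathcal{F}^{-1}[S_k]$.

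Finally, applying $\mathcal{F}$ to both sides and using $\mathcal{F}\mathcal{F}^{-1} = I$ gives $\mathcal{F}[\mathbf{u}_k] = S_k$. The proof is essentially bookkeeping, so there is no genuine obstacle; the one subtlety worth flagging is the sign flip in the exponent induced by the complex conjugation in $\langle\cdot,\cdot\rangle$, which is what aligns $\mathbf{w}_j$ (defined with $\omega^{-jl}$) with the forward DFT of $\mathbf{u}_k$ rather than its inverse.
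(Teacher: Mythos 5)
Your proposal is correct and is essentially the paper's own proof: the paper verifies $(\mathcal{F}^{-1}[S_k])_j=\langle \x_k,\w_j\rangle=({\bf u}_k)_j$ by moving the factor $\omega^{jl}$ into the conjugated slot of the inner product, which is exactly your coordinate computation read in the reverse direction. Your explicit flagging of the sign flip from the conjugation is the same key bookkeeping point, just spelled out in coordinates.
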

\begin{proof}
Eq. (\ref{eq:sfft_red}) follows directly by applying the inverse DFT on $S_k$,
$$ (\mathcal{F}^{-1}[S_k])_j = \frac{1}{p}\sum_{l=1}^{p}\langle \x_k, \x_l\rangle\omega^{jl}=\langle\x_k,\frac{1}{p}\sum_{l=1}^{p} \x_l\omega^{-jl} \rangle=\langle \x_k, \w_j\rangle=({\bf u}_k)_j.$$

\end{proof}

According to Lemma \ref{lemma:sfft_reduction}, each row $S_k$ of $S$ is the DFT of an appropriate vector \({\bf u}_k\). Since by assumption $S_k$ is approximately sparse, we may thus find its set of large indices  in sublinear-time by applying the sFFT algorithm on the input \({\bf u}_k\).
We then explicitly compute the corresponding entries in \(S_{k}\) directly from the original data \(\x_{1},\ldots,\x_p\).

Computing all \(p\) entries of all \(p\) vectors \(\{{\bf u}_k\}\) requires a total of \(O(np^{2})\) operations. However, with this time complexity we could have computed all entries of the matrix \(S\) to begin with. The key point that makes this reduction applicable is that the sFFT algorithm is sub-linear in time and to compute its output it reads at most $O(r\log(p/\delta)\log(p/r))$ coordinates of ${\bf u}_k$. In particular, it does not require a-priori evaluation of all \(p\) entries of each vector \({\bf u}_k\). Hence, we may compute on-demand only the entries of ${\bf u}_k$ requested by the algorithm. Since computing a single entry of ${\bf u}_k$ can be done in $O(n)$ operations, the total number of operations required for a single sFFT\ run is \(O(nr\log(p/\delta)\log(p/r))\).

To detect the large entries in all rows of \(S\), all \(p\) outputs of the sFFT algorithm should simultaneously satisfy Eq. (\ref{eq:sfft_algo_res}) with high probability.  Given that each sFFT run succeeds w.p. $\geq 2/3$, we show below that it suffices to invoke  $m=O(\log p)$ independent queries of the sFFT algorithm on each input ${\bf u}_k$. The output for each row is the union of the large indices found by all sFFT runs.

In more detail, for each row $k$ let $(J_{1},\y_1), \dots, (J_m,\y_m)$ be the outputs (indices and values)\ of sFFT on $m$ independent runs with the same input ${\bf u}_k$. Our approximation for the \(k\)-th row of \(S\) is then
\begin{equation}
\tilde S_{kj}=
\left\{
\begin{array}{cl}
S_{kj} & j\in \bigcup_i J_i \\
0      & \mbox{otherwise}
\end{array}
\right.
\end{equation}
The following lemma and corollary prove that $m=O(\log p)$ runs suffice to detect all large entries of $S$ with a constant success probability.

\begin{lemma}\label{lemma:sfft_prob} Let $m= \lceil\log(3p)/\log(3)\rceil=O(\log(p))$. Then, for each $k\in[p]$ the following inequality holds w.p. $\geq1- \frac{1}{3p}$,
$$ ||S_{k} -\tilde S_{k } ||\leq (1+\alpha)\min_{|| \y||_0\leq r}||S_{k}-\y||  +\delta||S_k||. $$
\end{lemma}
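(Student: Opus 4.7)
The plan is a straightforward boosting argument plus a short monotonicity observation. Each individual sFFT call succeeds with probability at least $2/3$, so with $m$ independent runs the probability that every call fails is at most $(1/3)^m$. Choosing $m=\lceil \log(3p)/\log 3\rceil$ makes this at most $1/(3p)$, which is exactly the failure probability we want to certify per row. So the probabilistic half of the argument boils down to checking the arithmetic in the exponent.

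The more substantive step is showing that, conditional on at least one successful run, the combined estimator $\tilde S_k$ actually satisfies the approximation bound. The key point is that on the union $U=\bigcup_i J_i$ we plant the \emph{exact} values $S_{kj}$ (computed from the raw data via \eqref{eq:s_rep_ip}), not the sFFT's estimates. Hence
\[
\|S_k-\tilde S_k\|^2=\sum_{j\notin U} S_{kj}^2,
\]
and this quantity is monotone nonincreasing as $U$ grows. In particular, if $i^{\star}$ indexes a successful run with output $(J_{i^\star},\y_{i^\star})$, then $J_{i^\star}\subset U$, so
\[
\|S_k-\tilde S_k\|^2 \;\leq\; \sum_{j\notin J_{i^\star}} S_{kj}^2 \;\leq\; \sum_{j\in J_{i^\star}}(S_{kj}-(\y_{i^\star})_j)^2+\sum_{j\notin J_{i^\star}} S_{kj}^2 \;=\;\|S_k-\hat{\x}_{i^\star}\|^2,
\]
where $\hat{\x}_{i^\star}$ is the sparse vector associated with $(J_{i^\star},\y_{i^\star})$. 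Applying Lemma~\ref{lemma:sfft_reduction} to identify $\mathcal{F}[{\bf u}_k]=S_k$ and then invoking the sFFT guarantee \eqref{eq:sfft_algo_res} on the successful run yields
\[
\|S_k-\tilde S_k\|\;\leq\;(1+\alpha)\min_{\|\y\|_0\leq r}\|S_k-\y\|+\delta\|S_k\|.
\]

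Putting the two pieces together proves the lemma. I don't anticipate a real obstacle: the probability bound is immediate from independence, and the deterministic half only requires noticing that swapping the sFFT's inexact values $\y_{i^\star}$ for the exact entries $S_{kj}$ on $J_{i^\star}$, and then enlarging the support to the union $U$, can only decrease the residual $\ell_2$ error. The one detail worth double-checking is that we really are using the exact values on $U$ in the definition of $\tilde S_k$ (so that the monotonicity step is valid), as stated above the lemma.
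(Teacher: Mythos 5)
Your proposal is correct and follows essentially the same route as the paper: the same two observations (replacing the sFFT's inexact values by the exact entries on $J_{i^\star}$ can only decrease the residual, and enlarging the support to the union $\bigcup_i J_i$ decreases it further) combined with the fact that $m$ independent runs all fail with probability at most $(1/3)^m \leq 1/(3p)$. The paper merely phrases the deterministic step as $\|S_k-\tilde S_k\|\leq\min_{i\in[m]}\|S_k-\y_i\|$ before invoking the guarantee for the successful run, which is the same chain of inequalities you wrote.
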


\begin{corollary}\label{cor:sfft_cor}Let $m= \lceil\log(3p)/\log(3)\rceil$ as in the previous lemma. Then with probability $\geq 2/3$, simultaneously for all rows $k\in[p]$
\begin{equation}
        \label{eq:sfft_cor}
||S_{k} -\tilde S_{k } ||\leq (1+\alpha)\min_{\y:||\y||_0\leq r}||S_{k}-\y||+\delta||S_k||.
\end{equation}
\end{corollary}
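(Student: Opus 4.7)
The plan is extremely short: Corollary 1 follows from Lemma 2 by a straightforward union bound over the $p$ rows of $S$.

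First, I would invoke Lemma 2, which guarantees that for any fixed row index $k\in[p]$, the inequality
\[
\|S_k-\tilde S_k\|\leq (1+\alpha)\min_{\|\y\|_0\leq r}\|S_k-\y\|+\delta\|S_k\|
\]
fails with probability at most $1/(3p)$, provided the number of independent sFFT runs is $m=\lceil\log(3p)/\log 3\rceil$. Note that the $m$ sFFT calls used for different rows $k$ are independent, but independence across rows is not needed for what follows, only the per-row failure bound is.

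Next, I would apply the union bound over $k=1,\dots,p$: the probability that there exists some row $k$ for which inequality (3) fails is at most
\[
\sum_{k=1}^p \Pr[\text{failure in row }k]\leq p\cdot\frac{1}{3p}=\frac{1}{3}.
\]
Taking complements, all $p$ inequalities hold simultaneously with probability at least $1-1/3=2/3$, which is exactly the statement of the corollary.

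There is no genuine obstacle here; the single design choice is the constant $\log(3p)/\log 3$ in the definition of $m$ in Lemma 2, which was picked precisely so that the per-row failure probability is $1/(3p)$, making the union bound yield the clean $2/3$ success probability quoted in the corollary. Hence the entire proof is two lines: cite Lemma 2, then union bound.
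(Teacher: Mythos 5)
Your proof is correct and matches the paper's approach exactly: the paper states that the corollary ``follows from a standard union-bound argument, details omitted,'' and your argument---per-row failure probability at most $1/(3p)$ from Lemma~\ref{lemma:sfft_prob}, then a union bound over the $p$ rows giving total failure probability at most $1/3$---is precisely that omitted argument.
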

{The proof of corollary \ref{cor:sfft_cor} follows from a standard union-bound argument, details omitted.
}

To conclude, we use multiple runs of the sFFT algorithm to find a set \(I\) of indices in $S$, whose entries \(S_{ij}\) may be potentially large, and which we then evaluate explicitly. The resulting approximation of $S$ is compactly represented as   $\{((i,j), S_{ij})\}_{(i,j)\in I}$.
This procedure is summarized in Algorithm \ref{algo:sfft_algo}. The following Theorem, proven in the appendix, provides a bound on its runtime and a guarantee for the  accuracy of its output.

\begin{algorithm}[t]
\caption{\tt sFFTCovEstimation($\x_1, \dots, \x_p, r, R, \epsilon$)}
\label{algo:sfft_algo}
\begin{algorithmic}[1]
\REQUIRE \ \\$(\x_1, \dots, \x_p)$: $p$ vectors of dimension $n$. \\$r$: bound on the number of large entries in each row.\\ $R, \epsilon$: sparsity parameters.
\ENSURE $\tilde S$: Compact representation of the large entries of $S$.
\STATE
Compute $W = (\w_1, \dots, \w_p)\in \mathbb{C}^{n\times p}$ using FFT, where $\w_j=\frac{1}{p}\sum_{l=1}^{p}\x_l\omega^{-jl}$
\STATE Set $I=\emptyset$
\STATE Compute all diagonal entries \(S_{ii}\) and the value  $M=\max_{i}{S_{ii}}$
\FOR {$j=1, \dots, p$}
\STATE
Calculate $(J_{1},\y_1), \dots, (J_m,\y_m)$ by running  $m=O(\log p)$ sFFT queries with input ${\bf u}_k=(\langle \x_k, \w_{1}\rangle,\dots,\langle \x_k, \w_{p}\rangle), \delta = \frac{\epsilon}{R+\sqrt{r}M}$ and $\alpha = 1$
\STATE Add $\{ k\}\times(\bigcup_{i=1}^{m} J_i)$ to $I$
\ENDFOR
\FOR {$(i,j)\in I$}
\STATE Calculate $S_{ij}=\langle \x_i, \x_j\rangle$
\ENDFOR
\RETURN $\tilde S = \{((i,j), S_{ij})\}_{(i,j)\in I}$\end{algorithmic}
\end{algorithm}

\begin{theorem}\label{claim:sfft_algo_correctness}
Assume $S$ is $(r, \mu,R,2)$-sparse, where $\mu > 2R+\epsilon$ for known $R, \epsilon>0$, and let \(M=\max_i S_{ii}\).
Then, Algorithm \ref{algo:sfft_algo}, which invokes the sFFT algorithm with parameters $\delta = \frac{\epsilon}{R+\sqrt{r}M}$ and $\alpha =1$, has a runtime of $O(nrp\log^2p\log((R+\sqrt{r}M)p/\epsilon))$ operations, and w.p. $\geq$ 2/3, its output set $I$ is guaranteed to include the set
$J_\mu(S)$ of all large entries of $S$.
\end{theorem}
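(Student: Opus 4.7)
The plan is to combine Corollary 1 with the $(r,\mu,R,2)$-sparsity assumption to show that for the chosen $\delta$ the approximation error $\|S_k-\tilde S_k\|$ is strictly less than $\mu$ on every row, and then translate this into a set-inclusion statement about $I$.

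First, I would unpack the two quantities on the right-hand side of (\ref{eq:sfft_cor}). Since $|J_\mu(S_k)|\leq r$, choosing $\y$ equal to $S_k$ on $J_\mu(S_k)$ and zero elsewhere is admissible for the $\|\y\|_0\leq r$ constraint, so by the $(r,\mu,R,2)$-sparsity hypothesis
\[
\min_{\|\y\|_0\leq r}\|S_k-\y\| \;\leq\; \Bigl(\sum_{j\notin J_\mu(S_k)} S_{kj}^{2}\Bigr)^{1/2} \;\leq\; R.
\]
For $\|S_k\|$ I would split the sum over $j\in J_\mu(S_k)$ and its complement. The tail contributes at most $R^2$ by the sparsity assumption; the head has at most $r$ terms, each bounded by $|S_{kj}|\leq\sqrt{S_{kk}S_{jj}}\leq M$ via Cauchy–Schwarz (equivalently, positive semidefiniteness of $S$), contributing at most $rM^{2}$. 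Hence $\|S_k\|\leq\sqrt{R^2+rM^2}\leq R+\sqrt{r}M$.

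With $\alpha=1$ and $\delta=\epsilon/(R+\sqrt{r}M)$, Corollary 1 then gives, simultaneously for all rows with probability at least $2/3$,
\[
\|S_k-\tilde S_k\| \;\leq\; 2R+\delta(R+\sqrt{r}M) \;=\; 2R+\epsilon \;<\; \mu,
\]
where the final inequality is the standing assumption $\mu>2R+\epsilon$. Now suppose for contradiction that some $(k,j)\in J_\mu(S)$ is missing from $I$. Then $j\notin\bigcup_i J_i$ in the $k$-th row, so $\tilde S_{kj}=0$ and $|S_{kj}-\tilde S_{kj}|=|S_{kj}|\geq\mu$, which exceeds the norm bound just derived, a contradiction. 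Hence $J_\mu(S)\subseteq I$.

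For the runtime, I would account for each block of the algorithm: the FFT computation of $W$ costs $O(np\log p)$ (one length-$p$ FFT per coordinate $i\in[n]$); the diagonal entries and $M$ cost $O(np)$; and for each of the $p$ rows, the $m=O(\log p)$ sFFT invocations each read $O(r\log(p/\delta)\log(p/r))$ entries of ${\bf u}_k$, with each entry $\langle\x_k,\w_j\rangle$ computable in $O(n)$ on demand. This gives a total of $O\!\bigl(nrp\log p\,\log(p/r)\,\log(p/\delta)\bigr)$ operations, which after substituting $\delta=\epsilon/(R+\sqrt{r}M)$ and absorbing $\log(p/r)=O(\log p)$ yields $O\!\bigl(nrp\log^{2}p\,\log((R+\sqrt{r}M)p/\epsilon)\bigr)$. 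The final step of explicitly evaluating $S_{ij}$ on $I$ costs $O(n|I|)=O(nrp\log p)$, which is dominated by the sFFT stage.

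The main conceptual step to be careful with is the bound on $\|S_k\|$: without the positive semidefiniteness bound $|S_{kj}|\leq\sqrt{S_{kk}S_{jj}}$ one cannot control the $r$ "large" contributions using only $M=\max_i S_{ii}$, and this is precisely what motivates the choice of $\delta$ in the theorem. The rest is essentially bookkeeping: matching the two error pieces $2R$ and $\delta\|S_k\|\leq\epsilon$ against the margin $\mu-2R-\epsilon>0$ and then propagating the $L_2$ error bound entry-wise to obtain the set-inclusion conclusion.
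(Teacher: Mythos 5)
Your proof is correct and follows essentially the same route as the paper's: invoke Corollary \ref{cor:sfft_cor} with the stated $\delta,\alpha$, bound the $r$-sparse approximation error by $R$ via the thresholded row, bound $\|S_k\|\leq R+\sqrt{r}M$, and derive a contradiction from any missed entry of magnitude $\geq\mu$, with the same runtime accounting. The only difference is that you explicitly justify $\|S_k\|\leq R+\sqrt{r}M$ via $|S_{kj}|\leq\sqrt{S_{kk}S_{jj}}\leq M$, a step the paper asserts without detail.
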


\section{Tree-based Algorithm}
\label{sec:tree_algo}
We now present a second more direct method to efficiently detect and compute the large entries of a sparse covariance matrix. This method, which assumes the threshold $\mu$ is a priori known,  is based on a bottom-up construction of $m$ random binary trees. To construct the $l$-th tree, we first place at  its $p$ leaves the following $n$ dimensional vectors $\{\eta_{lj}\x_j\}_{j=1}^p$, where $\eta_{lj}\overset{i.i.d.}{\sim}N(0,1)$. Then, at higher levels, the $n$-dimensional vector in each parent node is the sum of its offsprings. After the construction of the trees, the main idea of the algorithm is to make recursive coarse-to-fine statistical group tests, where all indices of various subsets $\mathcal{A}\subseteq[p]$ are simultaneously tested whether they contain at least one large entry of $S$ or not.

In more details, given as input a row  $k\in[p]$ and a set $\mathcal{A}\subseteq[p]$, we consider the following query or hypothesis testing problem:
$$ \mathcal{H}_0:J_\mu(S_k)\cap \mathcal{A}=\emptyset \quad \text{vs.} \quad
\mathcal{H}_1:J_\mu(S_k)\cap \mathcal{A}\neq\emptyset.
$$
Assuming $S$ is $(r, \mu, R, 2)$-sparse, with $R<\mu/2$, one way to resolve this query is by computing the following quantity
 \begin{equation}
\label{eq:tree_alg_sum}
F(k, \mathcal{A})=\sum_{j\in \mathcal{A}}\langle \x_{k}, \x_{j}\rangle^2.
\end{equation}
Indeed, under $\mathcal{H}_0$, $F(k,\mathcal{A})\leq R^2$, whereas under $\mathcal{H}_1$, $F(k, \mathcal{A})\geq \mu^2$.

A direct calculation of (\ref{eq:tree_alg_sum})
requires $O(n|\mathcal{A}|)$ operations, which may reach up to  $O(n p)$ when $|\mathcal{A}|$ is large. Instead, the algorithm {\em estimates} the value of $F(k, \mathcal{A})$ in Eq. (\ref{eq:tree_alg_sum}) using  $m$  i.i.d. samples $\{ y_l\}_{l=1}^{m}$ of the random variable
\begin{equation}
\label{eq:tree_alg_sum_rv}
Y(\mathcal{A}) = \langle \x_k,\sum_{j\in \mathcal{A}}\eta_j\x_j\rangle
\end{equation}
where $(\eta_1, \dots, \eta_p)\sim N(0,I_p)$. Since by definition $\mathbb{E}Y(\mathcal{A})=0$ and $Var[Y(\mathcal{A})]=F(k, \mathcal{A})$,  Eq. (\ref{eq:tree_alg_sum}) can be estimated by the empirical variance of the $m$ variables $\{y_l\}_{l=1}^m$. Again,  given a specific realization of $\eta_j$'s, direct calculation of Eq. (\ref{eq:tree_alg_sum_rv}) also requires $O(n|\mathcal{A}|)$ operations. Here our construction of $m$ binary trees comes into play,  as it efficiently provides us samples of  $\sum_{j\in \mathcal{A}}\x_j\eta_j$, for any subset of the form $\mathcal{A}=\{2^hi, \dots, 2^h(i+1) \}$, where $h\in[\log_{2} p]$ and $i\in[p/2^h] $. As described in section \ref{sec:quary} below, after the pre-processing stage of constructing the \(m\) trees, each query requires only $O(nm)$ operations. Moreover, we show that $m=O(\log p)$ trees suffice for our purpose, leading to $O(n\log p)$ query time.

To find large entries in the $k$-th row, a divide and conquer method is applied: start with $\mathcal{A}=\{1, \dots, p\}$, and check if $F(k, \mathcal{A})$ of Eq. (\ref{eq:tree_alg_sum}) is large by invoking an appropriate query. If so, divide $\mathcal{A}$ into two disjoint sets and continue recursively. With this approach we reduce the number of required operations for each row from $O(np)$ to $O(nr\log^2 p)$, leading to an overall time complexity of $O(nrp\log^2p)$.

\subsection{Preprocessing Stage - Constructing the trees}

To efficiently construct the $m$ trees, first $mp$ i.i.d. samples $\{\eta_{lj}\}$ are generated from a Gaussian distribution $N(0,1)$. For simplicity, we assume that $p=2^L$, for some integer $L$, leading to a full binary tree of height $L+1$.
Starting at the bottom, the value at  the $j$-th leaf in the $l$-th tree is set to be $n$-dimensional vector\ $\eta_{lj}\x_j$. Then, in a bottom-up construction, the vector of each node is the sum of its two offsprings. Since, each  tree has $2p-1$ nodes, and calculating the vector of each node requires $O(n)$ operations, the construction of $m$ i.i.d. random trees requires $O(nmp)$ operations.

 For future use, we introduce the following notation: for a given tree $T$, we denote by $T(h,i)$ the $i$-th node at the $h$-th level, where the root is considered to be at level zero. Furthermore,  we denote by $I(h,i)$ the set of indices corresponding to the leaves of the subtree starting from  $T(h,i)$.
The vector stored at the node $T(h,i)$ is the following $n$-dimensional random variable
$$ Val(T(h,i)) = \sum_{j\in I(h,i)}\eta_j\x_j$$
whose randomness comes only from the random variables $\eta_j$ (we here consider the samples $\{ \x_j\}$ as fixed). The entire tree construction procedure is described in Algorithm \ref{algo:tree_preprocess}.

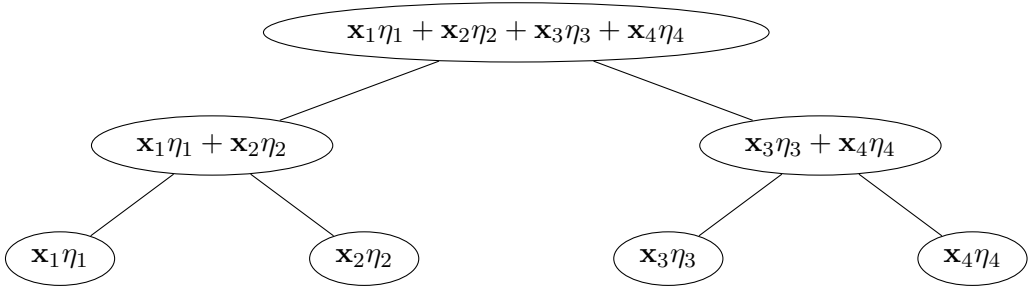
\begin{figure}[t!]
\label{fig:tree_example}
\centering
\begin{tikzpicture}[auto,
    level 1/.style={sibling distance=80mm},
    level 2/.style={sibling distance=40mm}]
\node [ellipse,draw] (n1){$\x_1\eta_{1}+\x_2\eta_{2}+\x_3\eta_{3}+\x_4\eta_{4}$} [sibling distance=60mm]
        child {
            node[ellipse,draw] (n2) {$\x_1\eta_{1}+\x_2\eta_{2}$}
            child {
                node[ellipse,draw] (n4) {$\x_1\eta_{1}$}
            }
            child {
                node[ellipse,draw] (n5) {$\x_2\eta_{2}$}
            }
        }
        child {
            node[ellipse,draw] (n3) {$\x_3\eta_{3}+\x_4\eta_{4}$}
            child {
                node[ellipse,draw] (n6) {$\x_3\eta_{3}$}
            }
            child {
                node[ellipse,draw] (n7) {$\x_4\eta_{4}$}
            }
        }
        ;
\end{tikzpicture}
\caption{Illustration of a random tree, for $p=4.$}
\end{figure}

\begin{algorithm}
\caption{\tt ConstructTrees$(\x_1, \dots, \x_p, m)$}
\label{algo:tree_preprocess}
\begin{algorithmic}[1]
\REQUIRE \ \\
$(\x_1, \dots, \x_p)$: $p$ vectors of dimension $n$. \\
$m$: number of trees.
\ENSURE $m$ random binary trees $(T_1, \dots, T_m).$
\STATE Generate $\eta_{lj}$, for $j\in[p]$ and $l\in[m]$, where $\eta_{lj}\overset{i.i.d.}{\sim}N(0,1)$
\FOR {$l=1,\dots,m$}
\FOR {$j=1,\dots,p$}
\STATE Set $Val(T_l(L, j)) = \eta_{lj} \x_j$
\ENDFOR
\FOR {$h=L-1,\dots, 0$}
\FOR {$i=1,\dots, 2^{h}$}
\STATE Set $Val(T_l(h, i))=Val(T_l(h+1, 2i))+Val(T_l(h+1, 2i+1))$
\ENDFOR
\ENDFOR
\ENDFOR
\end{algorithmic}
\end{algorithm}

\subsection{Algorithm description}
\label{sec:quary}
As mentioned before, we assume that the threshold $\mu$ is an input parameter to the algorithm. Given a set $\mathcal{A}=I(h,i)$ and a vector $\x_k$, to estimate (\ref{eq:tree_alg_sum}) the algorithm uses $m$ i.i.d. samples of (\ref{eq:tree_alg_sum_rv}) obtained from the relevant node $T(h,i)$ of the tree. For this, let $y_1,
\dots, y_m$ be the $m$ i.i.d. samples of $Y=Y(I(h,i))$,
$$ y_l = \langle\x_k, Val(T_l(h,i))\rangle=\sum_{j\in I(h,i)} \langle\x_{k}, \x_{j}\rangle\eta_{lj}.$$
Since $\mathbb{E}Y=0$ and $$\sigma^2:=Var(Y) =\sum_{j\in I(h,i)} \langle \x_{k}, \x_{j}\rangle^{2}$$
a natural estimator for (\ref{eq:tree_alg_sum}) is the sample variance
$${\hat\sigma}^2:=\frac{1}{m}\sum_{l=1}^my_l^2.$$
Recall that for a matrix $S$ which is $(r, \mu, R,2)$-sparse with \(R<\mu/2\), the exact value\ of Eq. (\ref{eq:tree_alg_sum}) allows us to perfectly distinguish between the following two hypotheses
\begin{equation}
\label{eq:tree_hyp}
\mathcal{H}_0: I(h,i)\cap J_\mu(S_{k})= \emptyset \quad\text{vs.}\quad
 \mathcal{H}_1: I(h,i)\cap J_\mu(S_{k})\neq \emptyset. \end{equation}
Given the estimate $\hat \sigma^2$, the algorithm considers the following test
\begin{equation}
\label{eq:tree_test}
{\hat\sigma}^2\mathop{\gtrless}_{\mathcal{H}_0}^{\mathcal{H}_1} \frac{3\mu^2}{4}
\end{equation}
If $\hat\sigma^2 \geq \frac{3\mu^2}{4}$, the algorithm continues recursively in the corresponding subtree. Lemma \ref{lemma:tree_test} below shows that $m=O(\log \frac{1}{\delta})$ trees suffice to correctly distinguish between the two hypotheses w.p. at least $1-\delta$.

In summary, the algorithm detects large entries in the $k$-th row by processing  the \(m\) random trees with the vector \(\x_k\); starting from the root, it checks if one of its children contains a large entry using the query presented in Eq. (\ref{eq:tree_test}), and continues recursively as required. This recursive procedure is described in Algorithm \ref{algo:tree_rec}. Then, as described in Algorithm \ref{algo:tree_algo}, the complete algorithm to detect all large entries of a sparse covariance matrix applies Algorithm 3 separately to each row. As in Algorithm \ref{algo:sfft_algo}, the output of Algorithm \ref{algo:tree_algo} is a compact representation of the large entries of the matrix, as a set of indices $I\subset[p]\times[p]$ and their corresponding entries    $\{S_{ij}\}_{(i,j)\in I}$.

\begin{algorithm}[t]
\caption{{\tt Find}$(\x, h, i,m, \{T_l\},\mu)$}
\label{algo:tree_rec}
\begin{algorithmic}[1]
\REQUIRE \ \\ $\x$: input vector of dimension $n$.\\ $(h,i)$: tree-node index.
\\ $m$: number of trees.
\\ $\{ T_l\}$: a collection of $m$ trees. \\
$\mu$: threshold parameter.
\ENSURE The set of large entries in the current sub-tree.
\IF {$h= L$}
\RETURN $\{ i\}$
\ENDIF
\STATE Set $a =\frac{1}{m} \sum_{l=1}^m \langle \x, Val(T_{l}(h+1, 2i))\rangle^2$,
$b =\frac{1}{m} \sum_{l=1}^m \langle \x, Val(T_{l}(h+1, 2i+1))\rangle^2$
\STATE Set $\mathcal{S}_a=\emptyset, \mathcal{S}_b=\emptyset$\IF {$a \geq \frac{3\mu^2}{4}$}
\STATE $\mathcal{S}_a = {\tt Find}(\x, h+1, 2i,m, \{T_l\},\mu)$
\ENDIF
\IF {$b \geq \frac{3\mu^2}{4}$}
\STATE $\mathcal{S}_b = {\tt Find}(\x, h+1, 2i+1, m, \{T_l\},\mu)$
\ENDIF
\RETURN $\mathcal{S}_a \cup \mathcal{S}_b$
\end{algorithmic}
\end{algorithm}
\begin{algorithm}[t]
\caption{{\tt SparseCovTree}$(\x_1, \dots, \x_p, m, \mu)$}
\label{algo:tree_algo}
\begin{algorithmic}[1]
\REQUIRE \ \\
$(\x_1, \dots, \x_p)$: $p$ vectors of dimension $n$. \\
$m$: number of trees. \\
$\mu$: threshold parameter.
\ENSURE $\tilde S$: compact representation of large entries of $S$.
\STATE Construct \(m\) trees: $\{T_l\}={\tt ConstructTrees}(\x_1, \dots, \x_p,m)$
\STATE Set $I=\emptyset$
\FOR {$k=1,\dots, p$}
\STATE Add $\{k\} \times {\tt Find}(\x_k, 0, 1, m, \{T_l\},\mu)$ to $I$
\ENDFOR
\FOR {$(i,j)\in I$}
\STATE Calculate $S_{ij}=\langle \x_i, \x_j\rangle$
\ENDFOR
\RETURN $\{((i,j), S_{ij})\}_{(i,j)\in I}$
\end{algorithmic}
\end{algorithm}

\subsection{Theoretical analysis of the Tree-Based algorithm}\label{sec:algo_analysis}

Two key questions related to Algorithm \ref{algo:tree_algo} are: i) can it detect w.h.p. large entries of $S$; and ii) what is its runtime. In this section we study these questions under the assumption that the matrix $S$ is approximately sparse. We start our analysis with the following lemma, proven in the appendix, which shows that $m=O(\log \frac{1}{\delta})$ trees suffice for the test in Eq. (\ref{eq:tree_test}) to succeed w.p. at least $1-\delta$.

\begin{lemma}
\label{lemma:tree_test}
Assume $S$ is $(r, \mu, R, 2)$-sparse, where $R< \mu/2$. Then, for $m\geq64\log(\frac{1}{\delta})$,
\begin{align*} \Pr[\textbf{false alarm}]=\Pr\left[{\hat\sigma}^2\geq \frac{3\mu^2}{4}\;|\;\mathcal{H}_0\right] \leq  \delta\\
 \Pr[\textbf{misdetection}] =\Pr\left[{\hat\sigma}^2< \frac{3\mu^2}{4}\;|\;\mathcal{H}_1\right]  \leq\delta
 \end{align*}
\end{lemma}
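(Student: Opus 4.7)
\medskip

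\noindent\textbf{Proof proposal.} The plan is to reduce both tail bounds to a single chi-squared concentration inequality. Observe that for a fixed row $\x_k$ and a fixed node $I(h,i)$, conditioned on the data, each sample $y_l=\sum_{j\in I(h,i)}\langle\x_k,\x_j\rangle\eta_{lj}$ is a linear combination of independent standard Gaussians, hence $y_l\sim N(0,\sigma^2)$ with $\sigma^2=\sum_{j\in I(h,i)}\langle\x_k,\x_j\rangle^2$, and $y_1,\dots,y_m$ are i.i.d. Consequently, $m\hat\sigma^2/\sigma^2\sim \chi^2_m$.

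First I would translate the two hypotheses into bounds on $\sigma^2$ using the $(r,\mu,R,2)$-sparsity. Under $\mathcal{H}_0$, $I(h,i)\cap J_\mu(S_k)=\emptyset$, so every $j\in I(h,i)$ is a ``small'' index and $\sigma^2\le R^2<\mu^2/4$ (by the assumption $R<\mu/2$). Under $\mathcal{H}_1$, there exists $j^{*}\in I(h,i)$ with $\langle\x_k,\x_{j^{*}}\rangle^2\ge\mu^2$, so $\sigma^2\ge\mu^2$.

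Next I would invoke the standard Laurent--Massart tail bounds for a $\chi^2_m$ random variable $Z$:
\begin{equation*}
\Pr[Z\ge m+2\sqrt{mx}+2x]\le e^{-x},\qquad \Pr[Z\le m-2\sqrt{mx}]\le e^{-x}.
\end{equation*}
For the false-alarm event, $\{\hat\sigma^2\ge 3\mu^2/4\}$ implies $\{Z\ge (3\mu^2/4)m/\sigma^2\}\supseteq \{Z\ge 3m\}$, and setting $x$ so that $m+2\sqrt{mx}+2x\le 3m$ yields an exponential bound $\le e^{-cm}$ with a constant $c$ that is more than sufficient (it is much larger than $1/64$). For the misdetection event, $\{\hat\sigma^2<3\mu^2/4\}$ implies $\{Z<(3\mu^2/4)m/\sigma^2\}\subseteq\{Z<3m/4\}$; taking $x=m/64$ in the lower tail gives $2\sqrt{mx}=m/4$, so $\Pr[Z\le 3m/4]\le e^{-m/64}$. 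Hence $m\ge 64\log(1/\delta)$ suffices for both events to have probability at most $\delta$, and this bound on the misdetection side is the binding one that dictates the constant $64$ in the lemma.

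The only step that requires any care is getting explicit constants; in particular the lower tail of $\chi^2_m$ is what forces the $64$, so I would present the misdetection calculation in full and only sketch the (easier) false-alarm calculation. There is no real conceptual obstacle: once one recognises that $m\hat\sigma^2/\sigma^2\sim\chi^2_m$ exactly, the result is a direct consequence of Gaussian concentration, and the gap $R<\mu/2$ between the two hypotheses provides the constant-factor separation needed for the test threshold $3\mu^2/4$ to sit strictly between the two regimes.
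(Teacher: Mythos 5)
Your proposal is correct and follows essentially the same route as the paper: bound $\sigma^2$ by $R^2<\mu^2/4$ under $\mathcal{H}_0$ and by $\mu^2$ under $\mathcal{H}_1$, note that $m\hat\sigma^2/\sigma^2\sim\chi^2_m$, and apply $\chi^2$ tail bounds, with the lower tail at $x=m/64$ dictating the constant $64$. The paper uses slightly different-looking upper/lower tail inequalities (citing Laurent--Massart among others), but its lower-tail bound is exactly the one you use, so the calculations coincide.
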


\begin{remark}
While the constant of 64 is not sharp, the logarithmic dependence on $\delta$ is.
\end{remark}
\begin{remark}
The choice of $\eta$ to be Gaussian is rather arbitrary. Standard concentration inequalities \cite{vershynin_2010} imply that every sub-Gaussian distribution with zero mean and variance $1$ will yield similar results, albeit with different constants.
\end{remark}

 Next, assuming $S$ is approximately sparse, Theorem \ref{theorem:tree_prob} below shows that w.h.p., Algorithm \ref{algo:tree_algo} indeed succeeds in finding all large entries of $S$. Most importantly, its runtime is bounded by $O(nrp\log^{2}p)$ operations, both w.h.p. and in expectation.

\begin{theorem}
\label{theorem:tree_prob}
Assume $S$ is $(r, \mu, R, 2)$-sparse, where $R< \mu/2$. Let $I(m)$ be the set of indices returned by  Algorithm \ref{algo:tree_algo} with threshold $
\mu$ and $m$ trees. For a suitably chosen constant $C$ that is independent of $n$, $m, r$ and $p$, define
$$ f(p,m) = \Pr\left[ J_\mu(S) \subseteq I(m) \text{ and } \text{Runtime}\leq Cnpr\log ^{2}p\right].$$
Then, for $m(p) = \lceil64\log(2rpL^3) \rceil$, where $L=\log_2p + 1$,
\begin{enumerate}[(i)]
\item
For all $p\geq 8$, $f(p, m(p))\geq 2/3$.
\item
$f(p, m(p))\xrightarrow{p\rightarrow\infty}1. $
\item
$\mathbb{E}[\text{Runtime}]\leq C'nrp\log ^{2}p$, for some absolute constant $C'$.
\end{enumerate}
\end{theorem}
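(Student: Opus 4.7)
\noindent\emph{Plan.} The plan is to prove correctness of the output and the runtime bound separately, both as consequences of a single application of Lemma~\ref{lemma:tree_test} with error parameter $\delta := 1/(2rpL^3)$. The chosen $m(p)=\lceil 64\log(2rpL^3)\rceil$ is exactly $\lceil 64\log(1/\delta)\rceil$, so by Lemma~\ref{lemma:tree_test} every individual test (\ref{eq:tree_test}) has misdetection probability at most $\delta$ under $\mathcal{H}_1$ and false-alarm probability at most $\delta$ under $\mathcal{H}_0$.

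\medskip

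\noindent\emph{Correctness ($J_\mu(S)\subseteq I(m)$).} The algorithm can fail to return some $j\in J_\mu(S_k)$ only if the recursion terminates before reaching the leaf $j$, which in turn requires a misdetection at one of the $L$ tests along the root-to-leaf path to $j$. There are at most $pr$ such leaves, hence at most $prL$ tests under $\mathcal{H}_1$, and a union bound gives
\[
\Pr\bigl[J_\mu(S)\not\subseteq I(m)\bigr]\;\leq\;prL\,\delta\;=\;\frac{1}{2L^{2}}.
\]

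\medskip

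\noindent\emph{Runtime via expected visits.} Let $V$ be the total number of invocations of Algorithm~\ref{algo:tree_rec} across all $p$ rows, and split $V=V_{\mathrm{true}}+V_{\mathrm{empty}}$ according to whether the current subtree index set $I(h,i)$ meets $J_\mu(S_k)$. Since each of the at most $rp$ large entries contributes a single root-to-leaf path of length $L$, we have $V_{\mathrm{true}}\leq p(rL+1)$ deterministically. For an empty node $v$ the inclusion $\{v\text{ is visited}\}\subseteq\{\text{test at }v\text{ fires}\}$ together with Lemma~\ref{lemma:tree_test} gives $\Pr[v\text{ visited}]\leq\delta$, so summing over the $\leq 2p$ nodes per tree and $p$ rows yields
\[
\mathbb{E}[V_{\mathrm{empty}}]\;\leq\;2p^{2}\,\delta\;=\;\frac{p}{rL^{3}}.
\]
Each visit performs $O(nm)=O(nL)$ work (two inner products per child, aggregated over $m$ trees), the preprocessing stage (Algorithm~\ref{algo:tree_preprocess}) costs $O(nmp)=O(npL)$, and the final loop over $(i,j)\in I$ costs $O(n|I|)$ with $|I|\leq rp+V_{\mathrm{empty}}$. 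Hence the runtime $W$ satisfies $W\leq c\,nL\bigl(V_{\mathrm{true}}+V_{\mathrm{empty}}\bigr)+O(nrp)$, whose expectation is $O\bigl(nL\cdot prL + nL\cdot p/(rL^{3}) + nrp\bigr)=O(nrpL^{2})=O(nrp\log^{2}p)$. This establishes part~(iii).

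\medskip

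\noindent\emph{Combining into parts (i) and (ii).} Markov's inequality applied to $V_{\mathrm{empty}}$ gives $\Pr[V_{\mathrm{empty}}>prL]\leq 1/(r^{2}L^{4})$. Choose the constant $C$ large enough that $V_{\mathrm{empty}}\leq prL$ implies $W\leq Cnrp\log^{2}p$. A second union bound then yields
\[
f(p,m(p))\;\geq\;1-\frac{1}{2L^{2}}-\frac{1}{r^{2}L^{4}}.
\]
For $p\geq 8$ we have $L\geq 4$ and $r\geq 1$, so the right-hand side exceeds $1-1/32-1/256>2/3$, proving (i); as $p\to\infty$ both error terms vanish, proving (ii).

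\medskip

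\noindent\emph{Main obstacle.} The principal subtlety I expect is that tests performed at a node and at any of its ancestors share the random variables $\eta_{lj}$ indexed by the leaves of the node, so these tests are \emph{not} independent. This rules out any naive ``probability of reaching node $v$'' computation based on multiplying per-level test probabilities. The plan sidesteps the issue by using only the marginal inequality $\Pr[v\text{ visited}]\leq \Pr[\text{test at }v\text{ fires}]\leq\delta$, which requires no independence between tests at different nodes and allows $\mathbb{E}[V_{\mathrm{empty}}]$ to be controlled by plain linearity of expectation.
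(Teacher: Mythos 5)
Your proof is correct, and for parts (i) and (ii) it takes a genuinely different route from the paper. The paper argues level by level: it defines events $V_k^h$ (``at level $h$ the algorithm visits every node meeting $J_\mu(S_k)$ and no node missing it''), bounds $\Pr[\overline{V}_k^h]\le 1/(pL^3)$, and then chains conditional probabilities to get $\Pr[\forall h: V_k^h]\ge (1-1/(pL^2))^L$ before a union bound over rows; correctness and the runtime bound are extracted from the same event $\{\forall k,h:V_k^h\}$. You instead decouple the two: a direct union bound over the at most $prL$ tests under $\mathcal{H}_1$ handles detection, and a first-moment bound on $V_{\mathrm{empty}}$ followed by Markov's inequality handles the runtime. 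Your version is cleaner and, I would say, more robust: the paper's unconditional bound $\Pr[\overline{V}_k^h]\le 2r\delta$ implicitly presumes the algorithm arrived at level $h$ having visited only the ``right'' nodes, whereas your argument uses only the marginal inequality $\Pr[v\text{ visited}]\le\Pr[\hat\sigma^2(v)\ge 3\mu^2/4]\le\delta$, exactly as you flag in your ``main obstacle'' remark -- this correctly sidesteps the dependence between tests at nested nodes, which is the same device the paper itself uses for part (iii) (where your argument and the paper's are essentially identical, both being linearity of expectation over indicator variables $X^k_{h,i}$ with the false-alarm bound from Lemma~\ref{lemma:tree_test}). The only price of your route is the extra Markov step and the mild requirement $r\ge 1$, $L\ge 4$ in the final numerical check, both of which are harmless here.
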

\begin{remark}The sparsity of $S$ is required only to bound the false alarm probability in Lemma \ref{lemma:tree_test}. If $S$ is not necessarily sparse, the algorithm will still locate w.h.p. all of its large entries. However, its runtime can increase significantly, up to $O(np^2\log^2p)$ operations in the worst case when $r=p$.
\end{remark}

\begin{remark} Note that since our tree-based algorithm analyzes each row of \(S\) separately, in fact $S$ need not be globally $(r, \mu, R, 2)$-sparse. Our algorithm is still applicable to a matrix \(S\) with $k$ non-sparse rows and all other rows $(r, \mu, R, 2)$-sparse. On the non-sparse rows our algorithm will be quite slow, and would analyze all the leaves of the tree. However, on the sparse rows, it would detect the large entries in time $O(npr\log^2 p)$. If $k=O(\log p)$ the overall
run time is, up to logarithmic factors in $p$, still $O(npr)$. 

\end{remark}
The runtime guarantees of Theorem \ref{algo:tree_algo} are probabilistic ones, where the algorithm runtime can reach up to $O(np^2\log^2p)$ operations, even when $S$ is sparse. For an a-priori known upper bound on the sparsity parameter $r$, one can slightly modify the algorithm to stop after $O(nrp\log^2p)$ operations. This modification may decrease the probability to detect all large entries, but still maintain it to be at least $2/3$. This is true since the event of finding all large entries contains the event of finding all large entries using a bounded number of operations, whereas the second event is not affected  by the modification.

\subsection{Comparison between the sFFT-based and tree-based algorithms  }
Table \ref{table:comparison} compares the main properties of the sFFT-based and the tree-based algorithms. Interestingly, even though the two algorithms are very different, their success relies on precisely the same definition of matrix sparsity. Moreover, the difference in the input parameters is not significant since knowledge of $R$ yields a lower bound for $\mu$, and vice versa.  Although the tree-based algorithm has a lower runtime complexity and is easier to understand and implement, the sFFT-based algorithm is still of interest as it illustrates a connection between two different problems which  seem unrelated at first glimpse; estimating the sparse Fourier transform of a given signal and detecting large entries of a sparse covariance matrix. Hence advances in sFFT  can translate to faster sparse covariance estimation.
\begin{center}
\begin{table}[H]
\small
\begin{tabular}{ |l | c |c |}
\hline
\textit{} & \textit{\textbf{sFFT-based }} & \textit{\textbf{SparseCovTree }} \\\hline
\textit{Sparsity assumptions} & $(r, \mu, R, 2)$-sparse with $\mu > 2R+\epsilon$ & $(r, \mu, R, 2)$-sparse with $\mu > 2R$ \\\hline
\textit{Runtime bound} & \multicolumn{1}{c|}{$O(nrp\log^2p\log((R+\sqrt{r}M)p/\epsilon)$} & \multicolumn{1}{c|}{$O(nrp\log^2p)$} \\\hline
\textit{Probability to detect all large entries}&   \multicolumn{1}{c|}{$\geq2/3$} & \multicolumn{1}{c|}{$\geq2/3$}\\\hline
\textit{Required input parameters }& $r, R, \epsilon$ and $M=\max\{S_{ii}\}$& $\mu$ only \\\hline
\textit{Dependencies on other algorithms}& Based on the sFFT algorithm & Standalone\\\hline
\end{tabular}
 \caption{Comparison between the sFFT-based and Tree-based algorithms.}
 \label{table:comparison}
\end{table}
\end{center}

\section{Relation between sample size, sparsity of 
$S$ and of $\Sigma$}
\label{sec:num_samples}
The theoretical analysis in the previous two sections assumed that the sample covariance matrix $S$ is approximately sparse. Typically, however, one may only assume that the population matrix $\Sigma$ is sparse, whereas $S$ is computed from a sample of $n$ observations $\z_1, \dots, \z_n$. In general, this may lead to a non-sparse matrix $S$, even when $\Sigma$ is sparse. Nonetheless, we show below that if the underlying r.v. $Z$ is sub-gaussian with a sparse covariance matrix  $\Sigma$, then given a sufficient number of samples  $n=O(p \log p)$, w.h.p. the corresponding sample covariance matrix $S$ is also approximately sparse, albeit with different parameters.

To this end,  recall that a random variable $Y$ is said to be sub-gaussian if
\begin{equation*}
||Y||_{\psi_2}:=\sup _{q\geq 1}q^{-1/2}(\mathbb{E}|Y|^q)^{1/q}<\infty
\end{equation*}
and similarly, a random variable $Y$ is said to be sub-exponential if $$
||Y||_{\psi_1}:=\sup _{q\geq 1}q^{-1}(\mathbb{E}|Y|^q)^{1/q}<\infty.$$
See for example  \cite{vershynin_2010}. Last, a random vector $Z=(Z_1, \dots, Z_p)$ is said to be sub-gaussian if $Z_i$ is sub-gaussian for every $i\in[p]$.

The following theorem provides us sufficient conditions on $\Sigma$, the underlying random variable \(Z\) and the number of samples $n$ to ensure, w.h.p., the approximate sparsity of $S$.

\begin{theorem}
\label{theorem:num_of_samples}
Assume $Z$ is a sub-gaussian random vector with covariance matrix $\Sigma$ which is $(r,\mu,R,2)$-sparse where $2R < \mu$. Let $K=\max_i||Z_i||_{\psi_2}$ and $t=\min\{\frac{\mu -2 R}{2\sqrt{p-r}+1},\frac{\mu - R}{4},K^{2}\}$. Then, for $n> C\frac{K^4}{t^2}\log (54p^2)$, where $C$ is an absolute constant, w.p. $\geq 2/3$, $S$ is $(r,\mu-t,\frac{1}{2}(\mu-t),2)$-sparse.  Moreover, with high probability, every large entry of $\Sigma$ (w.r.t. $\mu$) is a large entry of $S$ (w.r.t. $\mu-t$), and vice versa.
\end{theorem}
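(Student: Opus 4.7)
The plan is to reduce the theorem to a single high-probability uniform entrywise bound
\[
E \;=\; \bigl\{\,\max_{i,j\in[p]} |S_{ij}-\Sigma_{ij}|\le t\,\bigr\},
\]
and then derive both the sparsity of $S$ and the two-way correspondence between large entries of $\Sigma$ and of $S$ deterministically on this event. Each of the three branches in the definition of $t$ will correspond to one of these consequences.

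For the probabilistic half, I would invoke a Bernstein-type inequality for sub-exponential random variables. Since each $Z_i$ is sub-gaussian with $\|Z_i\|_{\psi_2}\le K$, each centered product $(Z_i-\mathbb{E}Z_i)(Z_j-\mathbb{E}Z_j)$ is sub-exponential with $\psi_1$-norm of order $K^2$, so a standard Bernstein bound (e.g.\ Proposition 5.16 of \cite{vershynin_2010}) gives, for $Z^{(1)},\dots,Z^{(n)}$ i.i.d.,
\[
\Pr\!\Bigl[\bigl|S_{ij}-\Sigma_{ij}\bigr|>t\Bigr]\le 2\exp\!\Bigl(-c\,n\min\{t^2/K^4,\,t/K^2\}\Bigr).
\]
The built-in constraint $t\le K^2$ (the third branch of the minimum) places us in the Gaussian regime, collapsing the exponent to $-cnt^2/K^4$. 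A standard adjustment absorbs the difference between the sample-mean centering used in $\x_i$ and centering by the true mean $\mathbb{E}Z$, at the cost of constants. A union bound over the $p^2$ pairs $(i,j)$, with total failure probability at most $1/3$, then yields the sample-size condition $n\ge CK^4\log(54p^2)/t^2$.

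Condition now on $E$ and fix a row $k$. If $j\in J_\mu(\Sigma_k)$, then $|S_{kj}|\ge|\Sigma_{kj}|-t\ge\mu-t$, so $j\in J_{\mu-t}(S_k)$. Conversely, if $j\notin J_\mu(\Sigma_k)$, then the individual consequence of the $L_2$-tail bound on $\Sigma_k$ gives $|\Sigma_{kj}|\le R$, hence $|S_{kj}|\le R+t$; the branch $t\le(\mu-R)/4$ yields $t<(\mu-R)/2$, so $|S_{kj}|<\mu-t$ and $j\notin J_{\mu-t}(S_k)$. This establishes the set identity $J_{\mu-t}(S_k)=J_\mu(\Sigma_k)$, i.e.\ the two-way large-entry correspondence, and in particular $|J_{\mu-t}(S_k)|\le r$. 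For the residual $L_2$ tail, the triangle inequality in $\ell^2$ on the complement (of cardinality at most $p-r$) gives
\[
\Bigl(\sum_{j\notin J_{\mu-t}(S_k)} S_{kj}^2\Bigr)^{\!1/2}\le\Bigl(\sum_{j\notin J_\mu(\Sigma_k)}\Sigma_{kj}^2\Bigr)^{\!1/2}+t\sqrt{p-r}\le R+t\sqrt{p-r}.
\]
The first branch $t\le(\mu-2R)/(2\sqrt{p-r}+1)$ rearranges exactly to $R+t\sqrt{p-r}\le(\mu-t)/2$, which is the sparsity radius required, so $S$ is $(r,\mu-t,(\mu-t)/2,2)$-sparse.

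The main obstacle is purely in the concentration step: one must verify that mean-centering each $\x_i$ by the sample mean (rather than the true mean) preserves the sub-exponential Bernstein tail. This is standard---one can either reduce to the mean-zero case without loss of generality, or bound the extra $\bar Z_i\bar Z_j$ term via its own sub-exponential concentration---but it is the sole place where the absolute constants (and in particular the specific value $54$ appearing in $\log(54p^2)$) get pinned down. Everything downstream is a deterministic chase through three compatible inequalities, one for each branch of the minimum defining $t$.
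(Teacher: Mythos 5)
Your proposal is correct, and its deterministic half (the three-way case analysis keyed to the three branches of $t$) is essentially identical to the paper's: the inclusion $J_\mu(\Sigma_k)\subseteq J_{\mu-t}(S_k)$ from the branch $t\le K^2$-independent triangle inequality, the reverse inclusion from $t\le(\mu-R)/4$ together with the dichotomy that every entry of $\Sigma_k$ is either $\ge\mu$ or $\le R$, and the residual bound $R+t\sqrt{p-r}\le(\mu-t)/2$ from $t\le(\mu-2R)/(2\sqrt{p-r}+1)$ all appear in the paper in the same form (the paper expands the square where you use the $\ell^2$ triangle inequality, a cosmetic difference). Where you genuinely diverge is the concentration step. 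The paper writes $S_{ij}=\z_i^tA\z_j$ with $A=\frac{1}{n-1}[I_n-\frac{1}{n}{\bf e}{\bf e}^t]$, uses the polarization identity to reduce the off-diagonal case to three quadratic forms $Y^tAY$ in vectors with independent but \emph{non-centered} sub-gaussian entries, and then proves a dedicated extension of the Hanson--Wright inequality to non-centered vectors (Lemma \ref{lemma:mat_ineq}, which the authors flag as of independent interest); the bounds $\|A\|_F,\|A\|\le 2/(n-1)$ and the restriction $t\le K^2$ then give the $\exp(-ct^2(n-1)/K^4)$ tail and the constant $54=18\times 3$ in the union bound. You instead apply a scalar Bernstein inequality to $\frac{1}{n-1}\sum_l(Z_i^{(l)}-\mathbb{E}Z_i)(Z_j^{(l)}-\mathbb{E}Z_j)$ and treat the sample-mean correction $\frac{n}{n-1}(\bar Z_i-\mathbb{E}Z_i)(\bar Z_j-\mathbb{E}Z_j)$ as a single sub-exponential term; this is more elementary, avoids the matrix machinery and the auxiliary lemma entirely, and the adjustment you defer is indeed routine (the correction term has $\psi_1$-norm $O(K^2/(n-1))$ after normalization and concentrates at least as fast), so your route is sound. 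What the paper's approach buys is the standalone generalized Hanson--Wright lemma; what yours buys is brevity. One shared imprecision worth noting: both you and the paper bound the cardinality of $[p]\setminus J_{\mu-t}(S_k)$ by $p-r$, which presumes $|J_\mu(\Sigma_k)|\ge r$ rather than the assumed $\le r$; replacing $\sqrt{p-r}$ by $\sqrt{p}$ (or adjusting $t$ accordingly) fixes this at the cost of a constant.
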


In addition to the probabilistic guarantees of Theorem \ref{theorem:num_of_samples} for fixed $p$ and $n$, we can also deduce stronger asymptotical results, as $n,p\rightarrow \infty$.

\begin{theorem}
\label{thm:assym_num_of_samples}
Assume $Z$ is a sub-gaussian random vector with covariance matrix $\Sigma$ which is $(r,\mu,R,2)$-sparse where $2R < \mu$, and let $t$ be as in Theorem \ref{theorem:num_of_samples}. Then, as $n\rightarrow \infty$ with $p$ fixed, or as $n,p\rightarrow \infty $ with $\frac{p\log p}{n} \rightarrow0$, the probability that $S$ is $(r,\mu-t,\frac{1}{2}(\mu-t),2)$-sparse with $J_{\mu - t}(S) = J_{\mu}(\Sigma)$ converges to one.
\end{theorem}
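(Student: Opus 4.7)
The plan is to obtain Theorem \ref{thm:assym_num_of_samples} as a direct asymptotic strengthening of Theorem \ref{theorem:num_of_samples}. The key idea is that the probabilistic argument behind Theorem \ref{theorem:num_of_samples} already yields an explicit failure probability that decays exponentially in $n t^{2}/K^{4}$; I will simply track this dependence and verify that the failure probability vanishes in each of the two stated regimes. Importantly, the deterministic implication used inside Theorem \ref{theorem:num_of_samples}'s proof -- namely that uniform entry-wise control $\max_{i,j}|S_{ij}-\Sigma_{ij}|\leq t$ forces both the $(r,\mu-t,\tfrac{1}{2}(\mu-t),2)$-sparsity of $S$ and the equality $J_{\mu-t}(S)=J_{\mu}(\Sigma)$ -- can be reused verbatim, so no new deterministic reasoning is required.

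The first step is to make the probabilistic core explicit. Since $Z$ is sub-gaussian with $\|Z_{i}\|_{\psi_{2}}\leq K$, each centered product of two coordinates is sub-exponential with $\psi_{1}$-norm of order $K^{2}$, and Bernstein's inequality for sub-exponential random variables (see \cite{vershynin_2010}) gives, for any $0<t\leq K^{2}$ and each fixed pair $(i,j)$,
$$
\Pr\bigl[|S_{ij}-\Sigma_{ij}|>t\bigr]\;\leq\;2\exp\bigl(-c_{1}\, n\, t^{2}/K^{4}\bigr).
$$
A union bound over the $p^{2}$ entries then controls the failure event $\mathcal{E}=\{\exists (i,j):|S_{ij}-\Sigma_{ij}|>t\}$ by $\pi(n,p):=2p^{2}\exp(-c_{1} n t^{2}/K^{4})$. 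On $\mathcal{E}^{c}$, Theorem \ref{theorem:num_of_samples}'s deterministic argument simultaneously delivers both conclusions we need.

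It then remains to show $\pi(n,p)\to 0$ in each regime. When $p$ is fixed and $n\to\infty$, $t$ is a strictly positive constant, so $\pi(n,p)\to 0$ trivially. In the joint regime $n,p\to\infty$ with $p\log p/n\to 0$, I would first observe that for $p$ large enough the first term of the minimum defining $t$ controls, so $t=\Theta(p^{-1/2})$ and hence $nt^{2}/K^{4}=\Omega(n/p)$. Because $n/(p\log p)\to\infty$, eventually $c_{1} nt^{2}/K^{4}\geq 4\log p$, so that $\pi(n,p)\leq 2p^{-2}\to 0$.

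The main (mild) obstacle is correctly tracking the $p$-dependence of $t$, in particular confirming that the term $(\mu-2R)/(2\sqrt{p-r}+1)$ is the active one in the defining minimum once $p$ is large enough; this is precisely where the hypothesis $p\log p/n\to 0$ matches the required decay rate of $\pi(n,p)$. Beyond this bookkeeping the argument introduces nothing new relative to Theorem \ref{theorem:num_of_samples}.
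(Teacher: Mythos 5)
Your proposal is correct and takes essentially the same route as the paper: reuse the entry-wise concentration bound and deterministic implication from the proof of Theorem \ref{theorem:num_of_samples}, union-bound over the $p^{2}$ entries, and observe that the failure probability $O\bigl(p^{2}e^{-cnt^{2}/K^{4}}\bigr)$ vanishes in both regimes since $t=\Theta(p^{-1/2})$ makes the exponent of order $n/p\gg\log p$ (the paper writes the same calculation by implicitly defining $\beta(n,p)$ via $n=\tfrac{2K^{4}}{ct^{2}}\log(\beta p^{2})+1$ and noting $\beta\to\infty$). The only minor caveat is your aside re-deriving the single-entry tail bound from Bernstein's inequality, which glosses over the sample-mean centering in $S_{ij}$ (the reason the paper routes through Lemma \ref{lemma:mat_ineq}); since you explicitly reuse the bound already established in Theorem \ref{theorem:num_of_samples}'s proof, this does not affect correctness.
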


Note that for large \(p\gg 1\), the parameter \(t\) in Theorem \ref{theorem:num_of_samples} is equal to \((\mu-2R)/(2\sqrt{p-r+1})\). Hence, the required number of samples for $S$ to be approximately sparse is\ $n>O(\log(p)/t^2)=O(p\log p)$. With such a number of samples, we can replace the assumptions on $S$ with corresponding assumptions on $\Sigma$ and obtain the following result analogous to Theorem \ref{theorem:tree_prob}.

\begin{corollary}
Assume $\Sigma$ is $(r,\mu, R, 2)$-sparse, where $R < \mu/2$. Then, as $n,p\rightarrow \infty$ with $\frac{p\log p}{n} \rightarrow0$, with probability tending to 1, Algorithm \ref{algo:tree_algo} finds all large entries of $S$, which correspond to large entries of $\Sigma$, using at most $O(nrp\log ^2p)$ operations.
\end{corollary}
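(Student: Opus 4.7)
The plan is to chain the sample-size result with the tree-algorithm guarantee by defining a level shift $t=t(p)$ and invoking the two theorems on a single good event. Specifically, I would first apply Theorem \ref{thm:assym_num_of_samples} with $t$ as in Theorem \ref{theorem:num_of_samples}: in the asymptotic regime $p\log p/n\to 0$, its conclusion yields an event $E_1$, occurring with probability tending to $1$, on which $S$ is $(r,\mu-t,\tfrac{1}{2}(\mu-t),2)$-sparse with $J_{\mu-t}(S)=J_{\mu}(\Sigma)$. Since for $p\gg 1$ the controlling term in $t$ is $(\mu-2R)/(2\sqrt{p-r}+1)$, I note that $t\to 0$; thus asymptotically the sample matrix $S$ is approximately sparse at a level $\mu':=\mu-t$ arbitrarily close to $\mu$.

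Next I would run Algorithm \ref{algo:tree_algo} with threshold $\mu'$ and $m(p)=\lceil 64\log(2rpL^3)\rceil$ trees. On $E_1$, the hypothesis of Theorem \ref{theorem:tree_prob} is met (modulo the mild boundary issue discussed below), so part (ii) of that theorem yields a second event $E_2$, also of probability tending to $1$, on which the algorithm returns an index set $I\supseteq J_{\mu'}(S)$ while performing at most $Cnrp\log^{2}p$ operations. Combining the two statements gives $\Pr[E_1\cap E_2]\to 1$, and on $E_1\cap E_2$ the algorithm outputs a set containing
\[
J_{\mu'}(S)=J_{\mu-t}(S)=J_{\mu}(\Sigma),
\]
in time $O(nrp\log^{2}p)$; explicit evaluation of the inner products on $I$ (as in the final loop of Algorithm \ref{algo:tree_algo}) then recovers the corresponding entries of $S$ within the same budget.

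The only genuine bookkeeping point, which I would address briefly, is that Theorem \ref{thm:assym_num_of_samples} produces the borderline sparsity parameter $R'=\tfrac{1}{2}(\mu-t)=\mu'/2$, whereas Theorem \ref{theorem:tree_prob} is stated under the strict condition $R<\mu/2$. This is resolved by shrinking $t$ by an arbitrarily small positive amount (or equivalently enlarging $\mu'$ downwards by $o(1)$): the bound in Theorem \ref{theorem:num_of_samples} is monotone in $t$, so a strictly smaller $t'<t$ still satisfies the concentration inequality for the same (asymptotically sufficient) $n$, and the resulting $R''=\tfrac{1}{2}(\mu-t')$ satisfies $R''<(\mu-t')/2+o(1)$ strictly. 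Apart from this minor adjustment the proof is simply an intersection-of-events argument, with no further estimates required beyond those already established in Theorems \ref{theorem:tree_prob} and \ref{thm:assym_num_of_samples}.
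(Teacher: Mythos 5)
Your proposal is correct and follows essentially the route the paper intends (the corollary is stated without an explicit proof, as a direct chaining of Theorem \ref{thm:assym_num_of_samples} with Theorem \ref{theorem:tree_prob} at the shifted threshold $\mu'=\mu-t$). Your observation about the borderline case $R'=\tfrac{1}{2}(\mu-t)$ versus the strict hypothesis $R<\mu/2$, and the fix by shrinking $t$ slightly, is a legitimate bookkeeping point that the paper glosses over; it does not change the argument.
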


In various contemporary statistical applications, the number of samples $n$ is comparable to the dimension $p$. In such a case, we may still detect the large entries of the population covariance matrix in sub-quadratic time. For example, we can divide the \(p\) variables into $K=p^{1-\alpha}$ distinct groups, each of size $p^{\alpha}$, and separately find the largest entries in each of the \(K^2\) sub-matrices of the full covariance matrix. In each pair, Corollary 5.1 applies, since if \(p,n\to\infty \) with $p/n\to const$, then $p^\alpha\log p/n\to 0$. The overall run time, up to logarithmic factors in \(p\) is now higher \(O(np^{2-\alpha}r)\) but still sub-quadratic.

Finally, note that throughout this section we assumed that the underlying random variable \(Z\) is sub-Gaussian. It is an interesting question whether some of the above theorems continue to hold under weaker tail conditions. 

\section{Simulations}
\label{sec:simulations}

\begin{figure}[t]
    \includegraphics[scale=0.53]{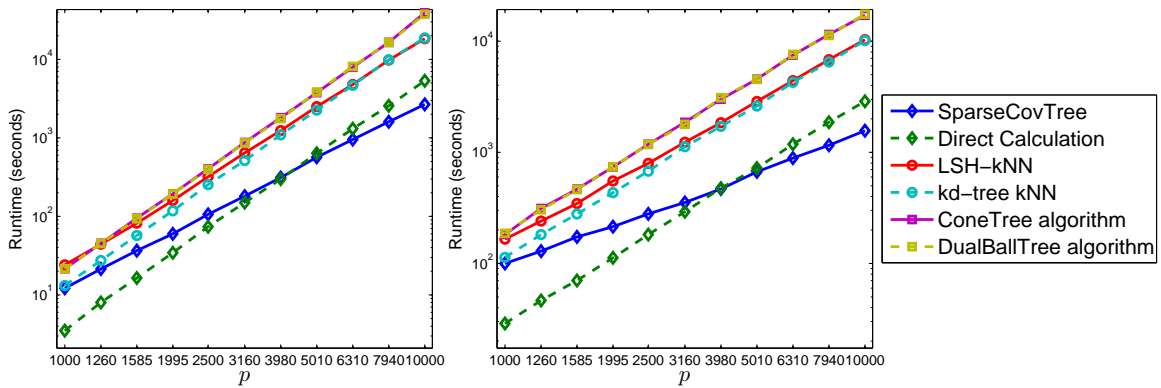}
\caption{Average runtime to detect large entries of a sparse covariance matrix as a function of the dimension for several algorithms, presented in a logartihmic scale in both axes. The input data was drawn from a Gaussian distribution with a random population covariance matrix and $r=\lfloor\frac{\log_2p}{3}\rfloor$. In the left panel, the number of samples increases with the dimension, $n=\lfloor p \log p\rfloor$, whereas in the right panel the number of samples is fixed, $n=50,000$.}
\label{fig:cov_runtime}
\end{figure}

In this section we illustrate the empirical performance of Algorithm 4, denoted SparseCovTree, on input drawn from a Gaussian distribution with a sparse covariance matrix. Furthermore, we compare it to other algorithms that can be used to locate the large entries of $S$: (1) LSH-$k$NN \cite{datar_indyk_2004}; (2) $kd$-tree \cite{kd_tree}; (3)
Dual-Ball-Ball (DBB) \cite{ram_2012}; (4) Dual-Ball-Cone (DBC) \cite{ram_2012}; and (5) direct calculation of all entries of \(S\), at a cost of \(O(np^{2})\) operations.
To the best of our knowledge, no public implementation of the sFFT algorithm of \cite{Nearly_Optimal_sFFT} is currently publicly available, thus we did not include it in the comparison.  For the LSH-$k$NN
and $kd$-tree algorithms, we used the mlpack library \cite{mlpack}, whereas for DBB and DBC algorithms we used the implementation kindly provided to us by  \cite{ram_2012}. All codes are in C++ and were compiled with the O3 optimization flag.

We generated  random population covariance matrices $\Sigma$ as follows: first $r=\lfloor\frac{\log_2p}{3}\rfloor$ entries in each row (and their corresponding transposed entries) were selected uniformly to be $\pm 1$, with all others set to zero. Then,  the diagonal entries were set to be $\pm 1$ as well. Last, to have a valid positive-definite covariance matrix, all diagonal entries were increased by the absolute value of the smallest eigenvalue of the resulting symmetric matrix plus one. Following this, we chose the threshold to be $\mu=0.5$.

The space complexity of SparseCovTree algorithm, which involves storing $m$ trees, is $O(npm)$. This may exceed the available memory for large $n, p$ and $m$. Thus, instead of simultaneously constructing and processing the entire \(m\) trees from the coarse level, we can construct and process  the sub-trees of each node in the different fine levels separately, starting from some coarse level $h\in[\log_{2} p]$. By doing so, the space complexity is reduced by a factor of $2^h$. These modifications can only increase the probability to locate large entries, whereas the theoretical runtime bound of Theorem \ref{theorem:tree_prob} increases to $O\left(nmp(\log_{2} p-h)2^{h}\right)$ operations.
 Particularly, in our experiments we used $h=5$, reducing the space complexity by a factor of $32$.
Moreover, the value  of $m$ as stated in Theorem \ref{theorem:tree_prob} is required mostly for the theoretical analysis. Since the multiplicative factor appearing in Lemma \ref{lemma:tree_test} is not sharp, and the probability of failure converges to zero as the dimension increases, we can in practice use a smaller number of trees $m$ and still detect most large entries of $S$,{ in sub-quadratic time}. Here we chose a fixed value of $m=20$, leading to discovery of more than $99\%$ of the large entries for all values of $p$ considered.

As for the LSH-$k$NN tuning parameters, we chose the number of projections to be 10, hash width 4 and bucket size 3500. For the second hash size, we picked a large prime number 424577, whereas the number of tables was configured according to \cite{LSH_lecture_notes_2008} with $\delta = \frac{1}{\log p}$. This configuration led to more than $99$\% discovery rate of all large entries, in all tested dimensions.

Figure \ref{fig:cov_runtime} shows the average runtime, in logarithmic scale, for various values of $p$ from 1000 to 10,000. Surprisingly, all alternative solutions, apart from  SparseCovTree, yield slower runtime performances than the direct calculation. We raise here several possible explanations for this. First, in all methods, except SparseCovTree and direct calculation, to locate negative large entries of $S$ we duplicate the input $\{\x_i\}$  with the opposite sign, thus potentially increasing the runtime by a factor of $2$. Moreover, it was suggested (see \cite{shrivastava_2014}) that space-partitioning-based methods, such as $kd$-tree,  DBB and DBC, may lead to slow runtime in high dimensions. For an empirical illustration of this issue, see \cite{Shrivastava_Li_2015}. In our case, the dimension of the search-problem is the number of samples $n$, which may indeed be very high. In contrast to theses algorithms, the LSH method was originally designed to cope well in high dimensions. However, to locate most of $O(pr)$ large entries of $S$, we tuned the LSH-$k$NN algorithm to have a small misdetection probability, by setting $\delta = \frac{1}{\log p}$. Such small value of $\delta$ led to a large number  tables, thus increasing the runtime by a significant factor.

 As for  SparseCovTree, in low dimensions direct calculation is preferable. However, for larger values of $p$, SparseCovTree is clearly faster. Moreover, the slope on a log-log scale of the direct approach is roughly two, whereas our method has  a slope closer to 1. 

In the above simulation, the underlying population matrix \(\Sigma\) was exactly sparse. Next, we empirically study the ability of the SparseCovTree algorithm to detect the large entries when the underlying population covariance matrix is not perfectly sparse, but only approximately so. Specifically, we generated a population covariance matrix similar to the procedure described above, only that the previously zero entries, are now $\pm\epsilon$. 
Figure \ref{fig:Sigma_epsilon} presents both the average run-time as well as the misdetection rate of our algorithm with \(m=10,25,50\) trees, as a function of $\epsilon$, for a covariance matrix of size \(p=2000\) and \(n=20000\) samples.
Each row of $\Sigma$ has an average of $r=8$ large entries, all of size $\mu=1$. Hence, \(R=R(\epsilon)\approx (p-r)\epsilon^2\). The critical value $\epsilon_{crit}$ at which $R=\mu/2$ is thus $\epsilon_{crit} = 1/\sqrt{4(p-r)}$. As seen in the left panel, the run-time scales roughly linearly with the number of trees, is nearly constant for \(\epsilon<\epsilon_{crit}\) and slowly increases for larger values of $\epsilon$. The right panel shows that in accordance to our theory, the misdetection rate is also nearly constant as long as \(\epsilon<\epsilon_{crit}\).
\begin{figure}[t]
\centering
\includegraphics[width=0.4\textwidth]{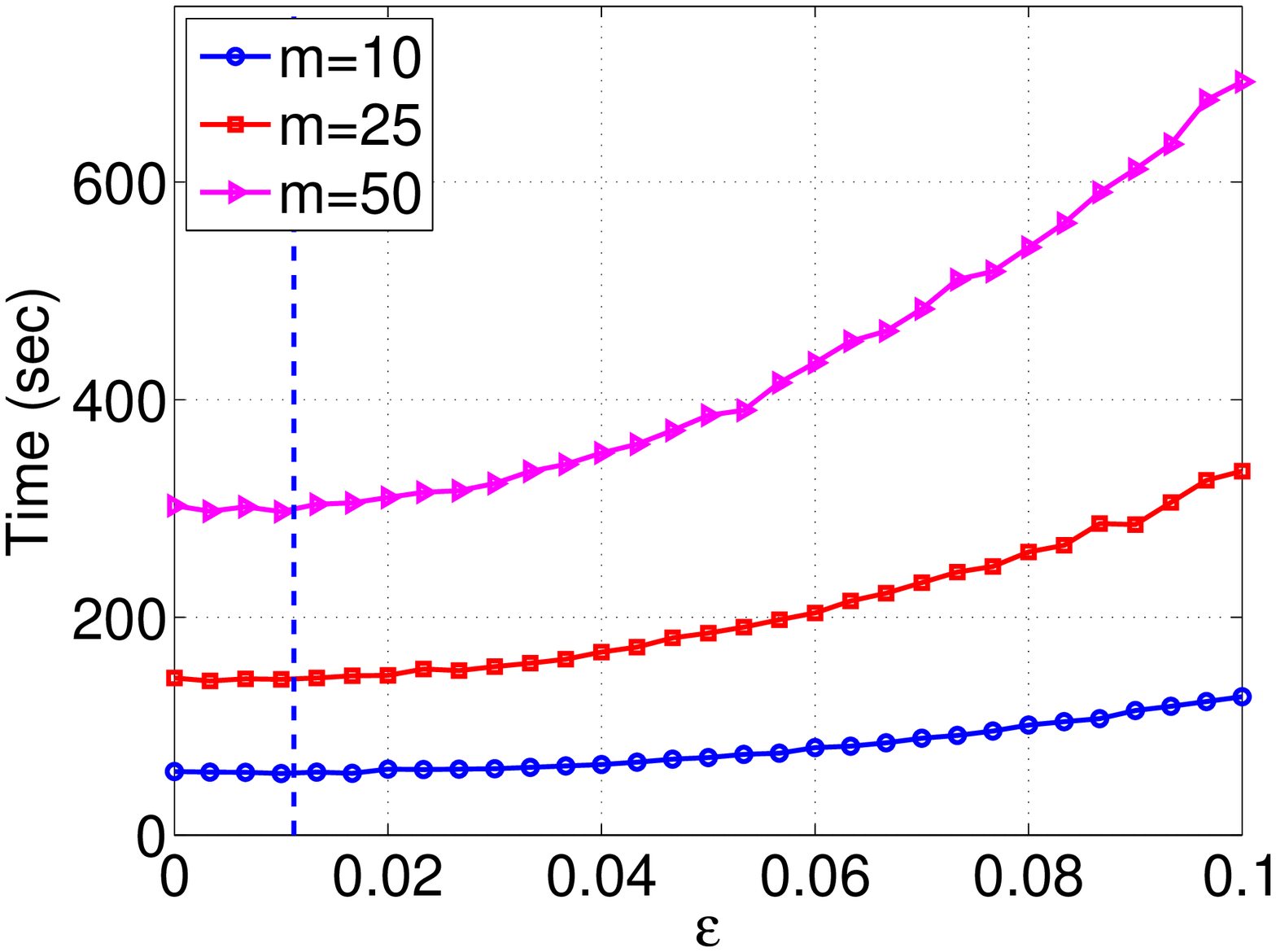}
\includegraphics[width=0.4\textwidth]{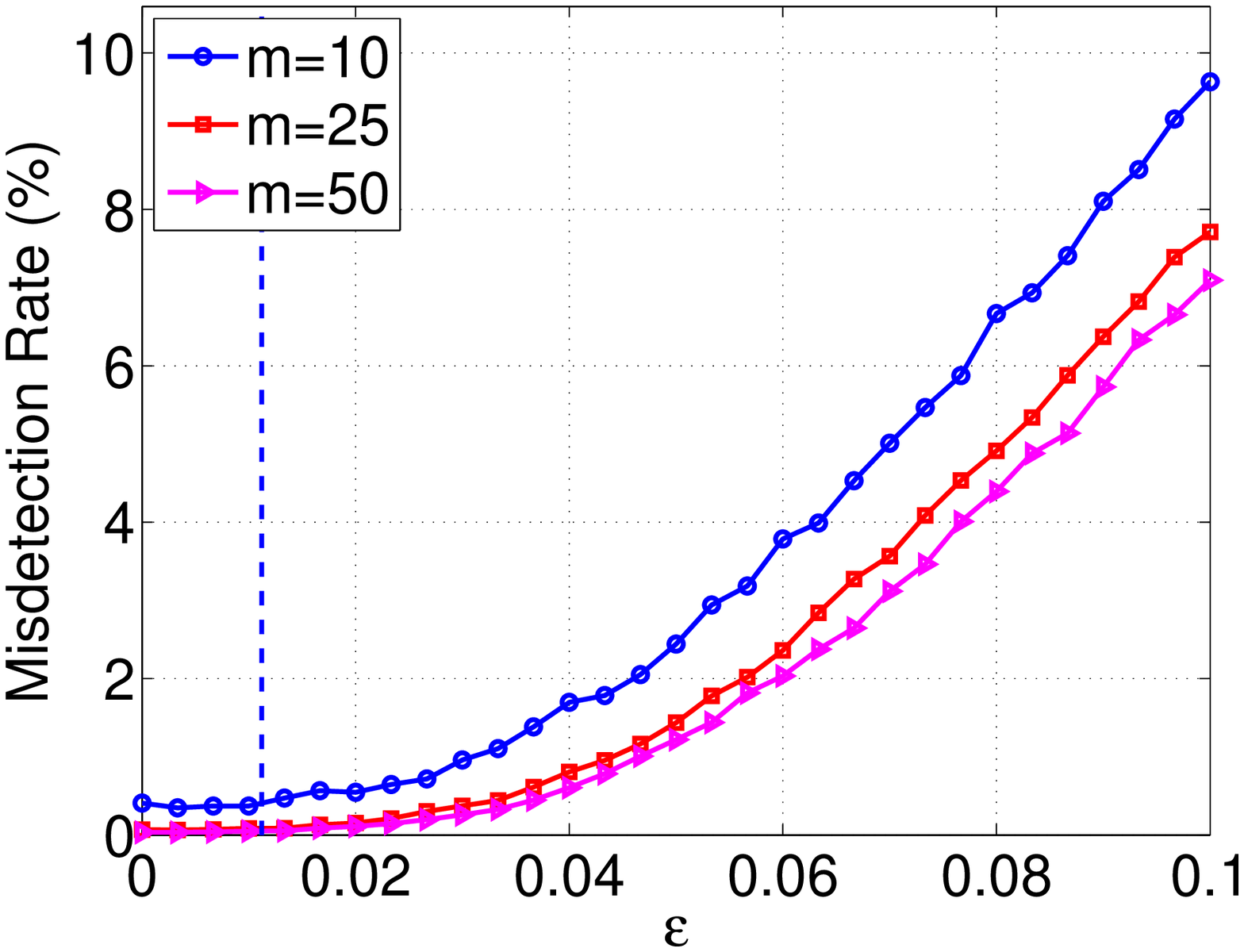}
\caption{Average run-time and misdetection rate (in percentage points) as a function of $\epsilon$, for fixed $p=2000$ and $n=20000$. The dashed vertical line is the critical value $\epsilon_{crit}$ where for the population matrix  \(R(\epsilon_{crit})=\mu/2.  \)}
\label{fig:Sigma_epsilon}
\end{figure}

Finally, we demonstrate an application of the SparseCovTree algorithm to detect near duplicates in  a large artificial data set: Given a reference data set with \(p\) elements, \(\{{\bf x}_i\}_{i=1}^p\), with each \(\x_i\in\mathbb{R}^n\), the goal is to quickly decide, for any query vector \({\bf y}\in\mathbb{R}^n\),  whether there exists $\x_i$ s.t. $Sim
(\x_i,\y)$ is larger than a given threshold $\mu$, where $Sim$ is the \textit{Cosine Similarity}. In contrast to the previous simulation, here the key quantity of interest is the average query time, and the computational effort of the pre-processing stage is typically ignored.

Specifically, we considered the following illustrative example: We first generated a reference set  $\{\x_i\}_{i=1}^p$, whose $p$ elements were all independently and uniformly distributed on the unit sphere $S^{n-1}$. Next, we evaluated the average run-time per query in two scenarios:\ a) a general query \(\y\), uniformly distributed on \(S^{n-1}\), and hence with high probability, not a near-duplicate; b) a near duplicate query $\bf{u}$, generated as follows:
$$
 {\bf u} = \sqrt{1-\epsilon}\x_{j} + \sqrt{\epsilon}\frac{\z - (\z^T \x_{j} )\x_{j}}{||\z - (\z^T \x_{j})\x_{j}||}
$$
where $\z$ is uniformly distributed on $S^{n-1}$ and the duplicate index $j$  is chosen uniformly from the set of elements $[p]$. 

In our simulations, the parameters $\epsilon = 0.25$, \(n=40,000\) and $m=20$ trees were kept fixed, and we varied the size  $p$ of the reference set. For a fair comparison to LSH, we decreased its number of projections to 7, and set its misdetection probability per single projection to be $\delta=0.3$.
Empirically, both our method and LSH\ correctly detected a near-duplicate with probability rate larger than 99\%. 
\begin{figure}[t!]
\centering
\includegraphics[scale=0.5]{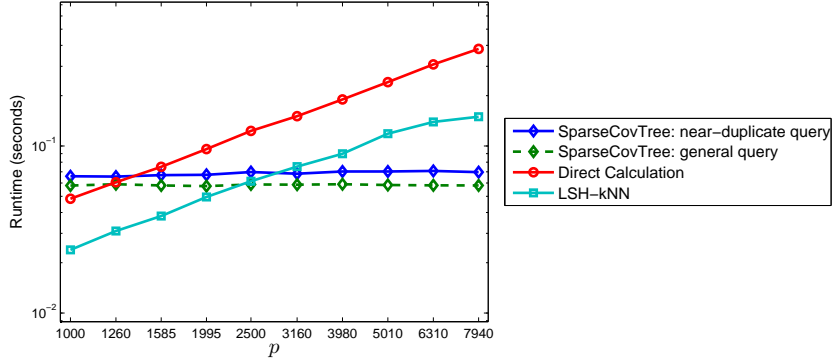}
\caption{Average runtime per single query to detect near-duplicates in a large artificial data set with $p$ elements  of dimension $n=40,000$, presented in a logarithmic scale in both axes. For SparseCovTree, we present the average runtime both for a near-duplicate and for a general input query. For the other two methods, this partition  did not affect the runtime significantly.}
\label{fig:near_dup}
\end{figure}

Figure \ref{fig:near_dup} shows the average query runtime for  SparseCovTree, LSH-$k$NN and the direct approach. In both LSH-$k$NN  and direct approach, the average query runtime did not depend on whether the query was a near-duplicate or not, and hence we show only one of them. For SparseCovTree, in contrast, the average runtime for a general query is lower  than for a near-duplicate query. The reason is that for a general query our  algorithm detects that this is not a near-duplicate by processing only the nodes at the first few top levels of the tree. For this problem, the LSH method is clearly preferable over direct calculation for all tested values of \(p\). When the number of elements \(p\) becomes sufficiently large, our method becomes faster. Most importantly, the runtime curves of SparseCovTree, both for the true and false near-duplicates, seem \textit{almost constant} compared to the other runtime curves, which are approximately linear in \(p\). This is   consistent with the theoretical analysis, as SparseCovTree query time  is $O(n\log^2p)$ operations, which is sub-linear, whereas the direct calculation query time is $O(np)$ operations, which is linear thus significantly larger, and LSH-$k$NN query time is $O(np^\rho\text{ poly} \log p)$ operations, for some constant $\rho > 0$, which is also sub-linear but not logarithmic as the SparseCovTree query time.

\section{Summary}

In this paper we considered the computational aspects of detecting the large entries of a sparse covariance matrix from given samples. We derived and theoretically analyzed two different algorithms for this task, under sparsity assumptions either on $S$ itself, or on $\Sigma$ and additional requirements on the underlying random variable $Z$ and number of samples $n$. We next demonstrated the time-efficiency of the algorithms, theoretically for both of them and empirically only for the tree-based one.

Our work raises several interesting questions for future research. If an oracle gave us the precise locations of all large entries, computing them directly would require $\Omega(npr)$ operations, a quantity lower by a $\text{poly}\log p$ factor from our results. This raises the following fundamental question: is it possible to reduce the poly-log factor, or is there a theoretical lower-bound on the required number of operations? In addition, in this article we only considered  a $L_2$-norm in the definition of matrix sparsity. It may be of interest to consider different norms, e.g. $L_1$, which could lead to different algorithmic results. Finally, both algorithms require partial knowledge on the sparsity parameters. It is thus of interest to efficiently estimate these parameters from the data, in case they are unknown.
\appendix

\section{Appendix - Proofs}
\subsection{Proof of Lemma \ref{lemma:sfft_prob}}
\begin{proof}For every independent run $i\in[m]$, by definition\((\y_i)_{j}=0\) for all \(j\notin J_i\). Hence,
\begin{equation} \label{eq:sfft_proof_1}||S_{k}-\y_i||^2 = \sum_{j\in J_i}(S_{kj} -(\y_i)_j)^2 + \sum_{j\not\in J_{i}}(S_{kj})^2 \geq \sum_{j\not\in J_{i}}(S_{kj} )^2 =||S_{k} -S_{k}^{J_i} ||^{2}
\end{equation}
where $(S_{k}^{J_i})_j=S_{kj}$ for $j\in J_i$ and otherwise $(S_{k}^{J_i})_j=0$. Moreover, for every $i\in[m]$,
$$||S_{k}||^2 - ||S_{k} -S_{k }^{J_{i}} ||^{2} =  \sum_{j\in J_{i}}(S_{kj})^2 \leq \sum_{j\in \bigcup_iJ_{i}}(S_{kj})^2 = ||S_{k}||^2 - ||S_{k} -\tilde S_{k } ||^2.$$
Hence
\begin{equation} \label{eq:sfft_proof_3}||S_{k} -\tilde S_{k }||\leq||S_{k} -S_{k }^{J_{i}} ||.
\end{equation}
By combining Eqs. (\ref{eq:sfft_proof_1}) and (\ref{eq:sfft_proof_3}), we obtain
\begin{equation} \label{eq:sfft_proof_4}
||S_{k} -\tilde S_{k }||\leq \min_{i\in[m]}||S_{k} -\y_{i} ||.
\end{equation}
Since \(\y_i\) is an output of the sFFT algorithm, according to Eq. (\ref{eq:sfft_algo_res}) it satisfies the following inequality, w.p. $\geq 2/3$
\begin{equation}\label{eq:sfft_proof_5} ||S_{k} -\y_{i} ||\leq  (1+\alpha)\min_{||\y||_0\leq r}||S_{k}-\y||+\delta||S_k||.
\end{equation}
Thus, by a union bound argument, w.p. $\geq1-(\frac{1}{3})^m=1-\frac{1}{3p}$ there exists at least one index $i$ for which Eq. (\ref{eq:sfft_proof_5}) holds. Combining this with Eq. (\ref{eq:sfft_proof_4}) concludes the proof.

\end{proof}

\subsection{Proof of Theorem \ref{claim:sfft_algo_correctness}}
\begin{proof}First we show that a sufficient condition for Algorithm
\ref{algo:sfft_algo} to detect all large entries of $S$ is that Eq. (\ref{eq:sfft_cor}) holds for all rows. Then, from Corollary \ref{cor:sfft_cor}, the probabilistic guarantee will follow. Using Eq. (\ref{eq:sfft_cor}) with $\delta = \frac{\epsilon}{R+\sqrt{r}M}$ and $\alpha =1$ yields
\begin{equation}
\label{eq:sfft_proof_2}
 ||S_{k}-\tilde S_{k}||\leq 2\min_{||\y||_0\leq r}||S_{k}-\y||+\frac{\epsilon}{R+\sqrt{r}M}||S_k||.
\end{equation}
Since $S$ is $(r, \mu, R, 2)$-sparse, then $||S_k||\leq R + \sqrt{r}M$,
which implies that
$$ ||S_{k}-\tilde S_{k}||\leq2\min_{||\y||_0\leq r}||S_{k}-\y||+\epsilon.$$
Let $T_\mu(S_k)$ be the $k$-th row of  $S$ thresholded at level $\mu$,
$$ (\text{T}_\mu(S_{k}))_{j}=S_{kj} \boldsymbol1[|S_{kj} |\geq\mu].$$
Each row of $S$ has at most $r$ large entries, thus $||T_\mu(S_k)||_0\leq r$ and from Eq. (\ref{eq:sfft_proof_2}) we obtain
$$ ||S_{k}-\tilde S_{k}||\leq 2||S_{k}-T_{\mu}(S_{k})||+\epsilon=2\left(\sum_{j\not\in J_\mu(S_k)}(S_{kj})^2\right)^{1/2}+\epsilon\leq 2R+\epsilon. $$
Now, assume to the contrary that there exists a large entry $S_{kj}$, for $j\in J_\mu(S_k)$, which the algorithm failed to detect. Then
$$ ||S_{k}-\tilde S_{k}||_{2}\geq |S_{kj}|\geq\mu$$
meaning
$ \mu \leq 2R+\epsilon$, which contradicts the assumption.

As for  the run-time of the algorithm,
the first two steps, calculating the matrix $W$ and the bound $M$, can be performed using only $O(np\log p)$ operations. Then,  for every row $k$, the algorithm makes $O(\log p)$ queries of the sFFT algorithm, where each query requires $O(nr\log p\log((R+\sqrt{r}M)p/\epsilon))$ operations. As for the last step, evaluating the output $\tilde S$ requires $O(n|I|)$ operations. Since the size of $I$ cannot exceed the number of sFFT queries multiplied by the number of operations for a single query, we conclude that the total number of operations is at most $O(nrp\log^2p\log((R+\sqrt{r}M)p/\epsilon))$.

\end{proof}

\subsection{Proof of Lemma \ref{lemma:tree_test}}
\begin{proof}First, let us study the quantity $\sigma^2$ and the distribution of $\hat \sigma ^2$, under both the null and under the alternative.

If $\mathcal{H}_0$ holds, since $S$ is $(r,\mu,R,2)$-sparse,$$ \sigma^2=\sum_{j\in  I(h,i)}\langle \x_{k}, \x_{j}\rangle^2\leq \sum_{j\not\in J_\mu(S_{k})}\langle \x_{k}, \x_{j}\rangle^2\leq R^{2}<\mu^{2}/4.$$
In contrast, if $\mathcal{H}_1$ holds, then there is at least one entry $|S_{kj}|\geq\mu$, leading to
$$ \sigma^2=\sum_{j\in I(h,i)}\langle \x_{k}, \x_{j}\rangle^2\geq \sum_{j\in J_\mu(S_{k}) \cap I(h,i)}\langle \x_{k}, \x_{j}\rangle^2\geq \mu^{2}.$$
Therefore, the two hypotheses
in Eq. (\ref{eq:tree_hyp}) may also be written as
$$ \mathcal{H}_0:\sigma^2 \leq R^{2} \quad \text{vs.} \quad \mathcal{H}_1:\sigma^2 \geq\mu^2. $$
Since for every $l\in[m]$, $y_l\sim N(0, \sigma^2)$, $${\hat\sigma}^2 \sim \frac{\sigma^2}{m} \chi^2_m.$$
In \cite{birnbaum_nadler_2012} and \cite{Johnstone_2009} respectively, the following two concentration inequalities for a \(\chi^2\) random variable were derived, for all $m\geq 2$ (see also \cite{Laurent_Massart_2000})
\begin{equation}\label{eq:chi_ineq1}
\Pr\left[\frac{\chi^2_m}{m}>1+\epsilon\right]\leq \frac{1}{\sqrt{\pi m} (\epsilon+\frac{2}{m})}\exp\left(-\frac{m}{2}(\epsilon -\log(1+\epsilon))\right) \qquad \forall\;\epsilon>0
\end{equation}
\begin{equation}\label{eq:chi_ineq2}
\Pr\left[\frac{\chi^2_m}{m}<  1-\epsilon\right]\leq \exp\left(- \frac{m\epsilon^2}{4}\right) \qquad\forall\;  0 <\epsilon<1.\end{equation}
We use these inequalities to bound the probabilities of false alarm and misdetection:
$$ \Pr[\textbf{false alarm}] = \Pr\left[{\hat\sigma}^2\geq \frac{3\mu^2}{4}\middle|\mathcal{H}_0\right]=
\Pr\left[\frac{\chi_m^2}{m}\geq \frac{3\mu^2}{4\sigma^2}\middle|\sigma^2\leq R^2\right]
\leq
\Pr\left[\frac{\chi_m^2}{m}\geq\frac{3\mu^2}{4R^{2}}\right]
$$
Since $4R^2 \leq \mu^2$,
$$\Pr[\textbf{false alarm}] \leq\Pr\left[\frac{\chi_m^2}{m}\geq3\right]. $$
Using (\ref{eq:chi_ineq1}) with $\epsilon = 2$, we obtain for $m \geq3\log(\frac{1}{\delta})$
$$ \Pr[\textbf{false alarm}] \leq\exp\left(-\frac{m}{2}(2-\log 3)\right)\leq\delta. $$
Similarly for the probability of misdetection,
$$ \Pr[\textbf{misdetection}] = \Pr\left[{\hat\sigma}^2< \frac{3\mu^2}{4}\middle|\mathcal{H}_1\right]=\Pr\left[\frac{\chi_m^2}{m}< \frac{3\mu^2}{4\sigma^2}\middle|\sigma^2\geq \mu^2\right]\leq \Pr\left[\frac{\chi_m^2}{m}< \frac{3}{4}\right].$$
Using (\ref{eq:chi_ineq2}) with $\epsilon = 1/4$, we obtain for $m \geq64\log(\frac{1}{\delta})$
$$
\Pr[\textbf{misdetection}] \leq\Pr\left[\frac{\chi_m^2}{m}< \frac{3}{4}\right]\leq  \exp\left(-\frac{m}{64}\right)\leq \delta.
$$

\end{proof}

\subsection{Proof of Theorem \ref{theorem:tree_prob}}
\begin{proof}We first introduce the following notation. For each $k\in [p]$ and $h\in[L]$, divide the nodes into two disjoint sets
$$ \mathcal{A}_k^h = \{ (h,i) :i\in[2^h],\; I(h,i) \cap J_{\mu}(S_k)\neq\emptyset\}$$
$$ \mathcal{B}_k^{h} = \{ (h,i) :i\in[2^h], \; I(h,i) \cap J_{\mu}(S_k)=\emptyset\}$$
We say that the algorithm visited or entered a node $T(h,i)$ while processing the trees with $\x_k$ if during its execution the call {\tt Find($\x_k, h, i, \{T_l\}, \mu$)} occurred.
Notice that as the algorithm processes the trees with input $\x_k$, in order to detect all large entries in the $k$-th row, while at level $h$ it \textit{must} visit all nodes in $\mathcal{A}_k^h$. In contrast, to have a fast runtime it should avoid processing nodes in $\mathcal{B}_k^h$.
Moreover, recall that the trees-construction stage deterministically requires $O(np\log p)$ operations. Thus, to prove the run-time bound, it suffices to show that after the preprocessing stage the algorithm uses at most $O(nrp\log ^2 p)$ operations, w.h.p.  for $(i)$ and $(ii)$, and in expectation for $(iii)$.

\textit{Proof of (i) and (ii).}
Let $V_{k}^{h}$ be the event that the algorithm visited all nodes in $\mathcal{A}_k^h$ but no nodes of $\mathcal{B}_k^h$, while processing the $h$-th level of the trees with $\x_k$. Since $S$ is sparse, $|\mathcal{A}_k^h|\leq r $, and therefore
$$\left| \bigcup_{k\in[p], h\in[L]}\mathcal{A}_k^h\right| \leq rpL.$$
Moreover, since the algorithm detects all large entries if and only if it visits all nodes of all sets $\mathcal{A}^h_k$, it suffices to bound the probability of $V_k^h$ simultaneously occurring for all $k\in[p]$ and $h\in[L]$.

 it has at most $2r$ new nodes to check in the $h+1$ level, unless $h = L$.
According  to lemma \ref{lemma:tree_test} with $\delta = \frac{1}{2rpL^3}$, the probability of the algorithm to enter a node in $\mathcal{B}_k^h$ or to skip a node in $\mathcal{A}_k^h$,  is at most $\frac{1}{2rpL^3}$. Denote by ${\overline V}_{k}^{h}$  the complementary event of ${V}_{k}^{h}$. This implies,
\begin{equation}
\label{eq:tree_proof_event}
\Pr\left[{\overline V}_{k}^{h} \right]\leq2r\cdot  \frac{1}{2rp L^{3}}=\frac{1}{pL^{3}}.
\end{equation}
Since
\begin{align*}\Pr\left[   { V}_{k}^{h} \right] &= \Pr\left[   {V}_{k}^{h} \middle| V_{k}^{1}, \dots, V_{k}^{h-1} \right]\Pr\left[V_{k}^{1}, \dots,V_{k}^{h-1} \right] \\&+ \Pr\left[    V_{k}^{h} \middle|\exists i\in[h-1]:{\overline V}_{k}^{i}   \right]\Pr\left[\exists i\in[h-1]:{\overline V}_{k}^{i} \right] \end{align*}
then
$$\Pr\left[   {V}_{k}^{h} \middle| V_{k}^{1}, \dots, V_{k}^{h-1} \right]= \frac{\Pr\left[{ V}_{k}^{h}  \right] -  \Pr\left[ V_{k}^{h} \middle| \;\exists i\in[h-1]:{\overline V}_{k} \right]\Pr\left[\exists i\in[h-1]:{\overline V}_{k}^{i} \right] }{\Pr\left[V_{k}^{1}, \dots,V_{k}^{h-1} \right]}.$$
Clearly $\Pr[V_{k}^{1}, \dots,V_{k}^{h-1}] \leq 1$, and thus
$$\Pr\left[   {V}_{k}^{h} \middle| V_{k}^{1}, \dots, V_{k}^{h-1} \right]\geq\Pr\left[   { V}_{k}^{h} \right] - \Pr\left[\exists i\in[h-1]:{\overline V}_{k}^{i} \right].$$
We use Eq. (\ref{eq:tree_proof_event}) to obtain
$$\Pr\left[   {V}_{k}^{h} \middle| V_{k}^{1}, \dots, V_{k}^{h-1} \right]\geq1-\frac{1}{pL^3} - (h-1)\frac{1}{pL^{3}} \geq1-\frac{1}{pL^{2}}.$$
Next, we derive, for a fixed $k$,
\begin{align*} \Pr\left[\forall h:V_{k}^{h}\right]&=\Pr\left[V_k^{0}\right]\cdot\prod_{h=1}^{L}\Pr\left[V_{k}^{h}\middle|V_k^0,\dots,V_{k}^{h-1}\right] \geq
\left( 1-\frac{1}{pL^{2}} \right)^{L}
\end{align*}
where clearly $\Pr[V_k^{0}]=1$. Therefore,

\begin{align}\label{eq:tree_prob_proof}
\Pr[\forall k, h: V_k^{h}] &=1 - \Pr[\exists k,h: {\overline V}_k^{h}] \geq1-\sum_{k=1}^p\Pr\left[\exists h :  {\overline V}_k^{h}\right]\\\nonumber&\geq 1 - p\left(1-\left( 1-\frac{1}{pL^{2}} \right)^{L}\right)
.\end{align}

To bound this expression, consider the function $g(x)=(1-x)^{L}$. Since $g''(\xi)\geq0$ for all $\xi\in[0,1]$,
it readily follows that for all $x\in[0,1]$, $g(x)  \geq (1-xL)$.
Hence, for $x=\frac{1}{pL^2}$ we obtain
\begin{equation}\label{eq:tree_proof_up_bound}
 \left(1-\frac{1}{pL^2}\right)^{L}\geq 1-\frac{1}{pL}.\end{equation}
Combining Eq. (\ref{eq:tree_prob_proof}) with Eq. (\ref{eq:tree_proof_up_bound}) yields
$$ \Pr[\forall k, h: V_k^{h}]\geq 1- p\left( 1- \left( 1-\frac{1}{pL }\right) \right)=1-\frac{1}{\log_{2} p}\xrightarrow{p\rightarrow\infty}1
$$
which proves \textit{(ii)}, and similarly, for $p\geq 8$,
$$ \Pr[\forall k, h: V_k^h] \geq 2/3$$
which proves \textit{(i)}.

\textit{Proof of (iii).}
Let $X_{h,i}^k$ be the indicator of the event that the algorithm entered the node $T(h,i)$, while processing the tree with $\x_k$. Recall that
$$ \Pr\left[X_{h,i}^k=1\right]\leq \Pr\left[{\hat\sigma}^2(I(h,i))\geq \frac{3\mu^2}{4}\right].$$
For nodes in $\mathcal{A}_k^h$, this probability is clearly at most one, whereas for nodes in $\mathcal{B}_k^h$ it is at most $\delta$, as $m\geq 64\log(\frac{1}{\delta})$. By its definition, the algorithm makes at most $cnm$ operations, for some absolute constant $c$, in each node it considers. Thus,
\begin{align*}
\mathbb{E}[\text{Runtime}]&=\mathbb{E}\left[\text{Number of visited nodes}\right]\cdot cmn \\&=
\sum_{k=1}^p\sum_{h=1}^{L}\sum_{i=1}^{2^h}\Pr\left[X_{h,i}^k=1\right]\cdot cmn\\&\leq
\sum_{k=1}^p\sum_{h=1}^{L}\left(|\mathcal{A}_k^h| + |\mathcal{B}_k^h| \delta\right)\cdot cmn
\\&\leq p\left( rL+2p\delta\right)\cdot cmn.\end{align*}
For $m\geq 64\log p$, $\delta=\frac{1}{p}$, and we conclude
$$ \mathbb{E}[\text{Runtime}]\leq C' nrp\log^2p $$
for some absolute constant $C'$.

\end{proof}

\subsection{Proof of Theorem \ref{theorem:num_of_samples}}

To prove Theorem \ref{theorem:num_of_samples} we first introduce the following auxiliary lemma. It provides exponential tail bounds on the deviations of a quadratic form from its mean value, when the underlying random vector has non-zero mean. 
This lemma thus generalizes previous concentration bounds for quadratic forms \cite{rudelson_2013} \cite{Boucheron_Lugosi_Massart_2013}, which all assumed the vector has mean zero. Hence, it may be of independent interest.

\begin{lemma}
\label{lemma:mat_ineq}
Let $Y=(Y_1, \dots, Y_n)^t$ be a sub-gaussian random vector where all $Y_i$'s are independent and let $K=\max_i ||Y_i||_{\psi_2}$. Let $A\in\mathbb{R}^{n\times n}$ be a real symmetric matrix. Then, for every $t>0$,
$$ \Pr\left[|Y^tAY-\mathbb{E}[Y^tAY]| > t\right]\leq 6 \exp\left[-c\min\left(\frac{t^2}{(n-1)K^4||A||^2_F}, \frac{t}{K^2||A||}\right)\right]$$
where $c$ is a positive absolute constant.
\end{lemma}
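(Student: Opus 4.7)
The natural strategy is to reduce to the zero-mean case via centering and then apply the classical Hanson--Wright inequality to the quadratic piece and a sub-Gaussian tail bound to the linear piece. Specifically, let $\mu=\mathbb{E}Y$ and write $Y = X + \mu$, so that $X$ is a mean-zero sub-Gaussian vector with independent coordinates and $\|X_i\|_{\psi_2}\leq C\|Y_i\|_{\psi_2}\leq CK$. Using the symmetry of $A$,
\begin{equation*}
Y^t A Y \;=\; X^t A X \;+\; 2\mu^t A X \;+\; \mu^t A \mu, \qquad \mathbb{E}[Y^t A Y] \;=\; \mathbb{E}[X^t A X] \;+\; \mu^t A \mu,
\end{equation*}
so that the deviation splits as $Y^t A Y - \mathbb{E}[Y^t A Y] = (X^t A X - \mathbb{E}[X^t A X]) + 2\mu^t A X$.

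The first summand is a quadratic form of a zero-mean sub-Gaussian vector with independent entries, to which the Hanson--Wright inequality (see \cite{rudelson_2013}) applies directly, yielding
\begin{equation*}
\Pr\!\left[\,|X^t A X - \mathbb{E}[X^t A X]| > t/2\,\right]\;\leq\; 2\exp\!\left[-c\min\!\left(\tfrac{t^2}{K^4\|A\|_F^2},\;\tfrac{t}{K^2\|A\|}\right)\right].
\end{equation*}
The second summand $2\mu^t A X = 2\sum_i (A\mu)_i X_i$ is a sum of independent sub-Gaussian variables, hence itself sub-Gaussian with $\|2\mu^t A X\|_{\psi_2}^2 \leq C K^2 \|A\mu\|_2^2$. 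The standard Hoeffding-type tail for sub-Gaussians then gives
\begin{equation*}
\Pr\!\left[\,|2\mu^t A X| > t/2\,\right]\;\leq\; 2\exp\!\left(-\tfrac{c\, t^2}{K^2\|A\mu\|_2^2}\right).
\end{equation*}

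To express the linear bound in the form appearing in the statement, I would bound $\|A\mu\|_2^2 \leq \|A\|_F^2\,\|\mu\|_2^2$ and then $\|\mu\|_2^2 \leq (n-1)\,C K^2$ using the standard fact $|\mathbb{E}Y_i|\leq \|Y_i\|_{\psi_2}\leq K$ (absorbing the absolute constant into $c$). This converts the linear tail into $2\exp(-c\,t^2/((n-1)K^4\|A\|_F^2))$, which is dominated by (i.e.\ weaker than) the sub-Gaussian part of the Hanson--Wright bound; taking the minimum with the sub-exponential term $t/(K^2\|A\|)$ and combining the two estimates by a union bound yields the stated inequality with a prefactor of at most $6$ (accounting for the $2+2$ from the two tail bounds plus a small slack from merging constants in the $\min$).

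The main obstacle I anticipate is not the decomposition itself but the bookkeeping of constants: the centered variables $X_i$ have a $\psi_2$-norm only up to an absolute multiple of $K$, and the factor $(n-1)$ (rather than $n$ or $n+1$) must be tracked carefully when bounding $\|\mu\|_2^2$ and when merging the two exponential tails into a single $\min$ expression. A secondary technical point is verifying that $\mathbb{E}X^tAX = \mathrm{tr}(A\,\mathrm{Cov}(Y))$ is finite (guaranteed by the sub-Gaussian moment bound), so that the centering in the Hanson--Wright step is well-defined.
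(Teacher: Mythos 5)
Your proof is correct and takes essentially the same route as the paper's: both center $Y$ and split the deviation into a centered quadratic form plus the linear term $2\mu^t A X$ (the paper writes its coefficients as $b_i=2\sum_{j\neq i}a_{ij}\mathbb{E}Y_j$), bound the quadratic part via Hanson--Wright (the paper re-derives it as a diagonal Bernstein term plus the off-diagonal Rudelson--Vershynin argument, hence its prefactor $6$ versus your $4$), and bound the linear part by a sub-Gaussian Hoeffding inequality with $\|A\mu\|_2^2\lesssim (n-1)K^2\|A\|_F^2$. The only bookkeeping nit is that $\|\mu\|_2^2\leq nK^2$ rather than $(n-1)K^2$, which is harmless since it is absorbed into the absolute constant $c$ for $n\geq 2$.
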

\begin{proof}
First, observe that due to the independent of the $Y_i$'s,
$$ Y^tAY-\mathbb{E}[Y^tAY] = \sum_i a_{ii}[Y_i^2-\mathbb{E}Y_i^2] + \sum_{i\neq j}a_{ij}[Y_{i}Y_j - \mathbb{E}Y_i\mathbb{E}Y_j].$$
Since
$$ (Y_i - \mathbb{E}Y_i)(Y_j - \mathbb{E}Y_j) = Y_iY_j - Y_j\mathbb{E}Y_i-Y_i\mathbb{E}Y_j + \mathbb{E}Y_i\mathbb{E}Y_j$$
and $A$ is symmetric, we obtain
\begin{align*}  \sum_{i\neq j}a_{ij}[Y_iY_j - \mathbb{E}Y_i\mathbb{E}Y_j]&= \sum_{i\neq j}a_{ij}(Y_i - \mathbb{E}Y_i)(Y_j - \mathbb{E}Y_j)+2\sum_{i\neq j}a_{ij}[Y_i\mathbb{E}Y_j - \mathbb{E}Y_i\mathbb{E}Y_j]\\
& =\sum_{i\neq j}a_{ij}(Y_i - \mathbb{E}Y_i)(Y_j - \mathbb{E}Y_j)+\sum_{i}b_i[Y_i - \mathbb{E}Y_i]\end{align*}
where $b_i = 2\sum_{j\neq i}a_{ij}\mathbb{E}\y_j$. Therefore,
\begin{align*} \left|Y^tAY-\mathbb{E}[Y^tAY]\right|\leq&  \left|\sum_i a_{ii}[Y_i^2-\mathbb{E}Y_i^2]\right|+\left|\sum_{i\neq j}a_{ij}(Y_i - \mathbb{E}Y_i)(Y_j - \mathbb{E}Y_j)\right|+\left|\sum_{i}b_i[Y_i - \mathbb{E}Y_i]\right|
\end{align*}
and the problem reduces to estimating the following probabilities
\begin{align}
\label{eq:lemma_proof_hw_prob}
\rho:= \Pr\left[|Y^tAY-\mathbb{E}[Y^tAY]| > t\right]\leq & \Pr\left[|\sum_i a_{ii}[Y_i^2-\mathbb{E}Y_i^2]| > t/3\right]+\\
& \nonumber\Pr\left[|\sum_{i\neq j}a_{ij}(Y_i - \mathbb{E}Y_i)(Y_j - \mathbb{E}Y_j)| > t/3\right]+\\
& \nonumber\Pr\left[|\sum_{i}b_i[Y_i - \mathbb{E}Y_i]| > t/3\right]=:\rho_1+\rho_2+\rho_3.
\end{align}
As for the first term, notice that $Y_i^2 -\mathbb{E}Y_i^2$'s are independent centered sub-exponentials, with
$$ ||Y_i^2 - \mathbb{E}Y_i^2||_{\psi_1}\leq 4 ||Y_i||_{\psi_2}^2\leq 4K^2.$$
Thus we can use Proposition 5.16 of \cite{vershynin_2010} to obtain
$$ \rho_1 \leq 2\exp\left[-c_1 \min \left(\frac{t^2}{K^4\sum_{i}a_{ii}^2}, \frac{t}{K^2\max|a_{ii}|}\right)\right]$$
for some absolute constant $c_1$. Next, to bound the second sum, we follow the arguments of \cite{rudelson_2013} in the proof of the Hanson-Wright inequality to obtain
$$ \rho_2 \leq 2\exp\left[-c_2 \min \left(\frac{t^2}{K^4||A||^2_{F}}, \frac{t}{K^2||A||}\right)\right]$$
for some absolute constant $c_2$.

If $\mathbb{E}[Y]=0$ or $A$ is a diagonal matrix, then all coefficients $b_i=0$. The third term in Eq. (\ref{eq:lemma_proof_hw_prob}) then vanished, and we recover a results similar to the original Hanson-Wright inequality.
 Assume then that $\mathbb{E}Y \neq 0$ and $A$ is not diagonal, which leads to $\bf b \neq 0$. Since the random variables $Y_i - \mathbb{E}Y_i$'s are centered independent sub-gaussians with $||Y_i-\mathbb{E}Y_i||_{\psi_2}\leq ||Y_i||_{\psi_2}+|\mathbb{E}Y_i|\leq2K$, we then use Proposition 5.10 of \cite{vershynin_2010} to obtain
$$ \rho_3 \leq 2\exp\left[-c_3\frac{t^2}{K^2||{\bf b}||_2^2}\right]$$
for some absolute constant $c_3$.
 Since $||{\bf b}||_2^2\leq (n-1)K^2||A||^2_F$,
$ \sum_{i}a_{ii}^2\leq ||A||^2_F$
and $\max|a_{ii}|\leq ||A|| $
the lemma follows.

\end{proof}

\begin{proof}[Proof of Theorem \ref{theorem:num_of_samples}]
Let $A$ be the $n\times n$ matrix
$$A = \frac{1}{n-1}\left[I_{n}-\frac{1}{n}{\bf e}{\bf e^t}\right] $$
where ${\bf e}=(1, \dots, 1)\in\mathbb{R}^n$. Since
$$ S_{ij}=\z_i^tA\z_j = \frac{1}{2}\left[\z_i^tA\z_i + \z_j^tA\z_j - (\z_i - \z_j)^tA(\z_i - \z_j) \right]$$
and $\Sigma_{ij}=\mathbb{E}[S_{ij}]$, then
\begin{align*}
\Pr\left[|S_{ij}-\Sigma_{ij}|> t\right]\leq& \Pr\left[|\z_i^tA\z_i -\mathbb{E}[\z_i^tA\z_i] |>\frac{2t}{3}\right]+\\
& \Pr\left[|\z_j^tA\z_j -\mathbb{E}[\z_j^tA\z_j] |>\frac{2t}{3}\right]+\\
& \Pr\left[|(\z_i - \z_j)^tA(\z_i - \z_j) -\mathbb{E}[(\z_i - \z_j)^tA(\z_i - \z_j)] |>\frac{2t}{3}\right].
\end{align*}

Now, since $\z_i$ and $\z_i-\z_j$ are sub-gaussian vectors with independent entries, we use Lemma \ref{lemma:mat_ineq} to obtain
$$ \Pr\left[|S_{ij}-\Sigma_{ij}|> t\right]\leq18\exp\left[-c\min\left(\frac{t^2}{(n-1)K^4||A||^2_F}, \frac{t}{K^2||A||}\right)\right] $$
for some absolute constant $c>0$. Clearly both $||A||_F$ and $||A||$ are bounded by $\frac{2}{n-1}$, and therefore we conclude that for $ t\leq K^2$
\begin{equation}
\label{eq:samples_HW_ineq}
\Pr\left[|S_{ij}-\Sigma_{ij}|> t\right]\leq 18\exp\left( -\frac{Ct^2(n-1)}{K^4} \right).
\end{equation}
for some absolute constant $C$.
For $n>\frac{CK^4}{t^{2}}\log (54p^2)$, this probability is smaller than $\frac{1}{3p^2}$. Then, applying a simple union-bound argument yields
\begin{equation}
\label{eq:samples_event}
\Pr\left[\forall i,j: |S_{ij}-\Sigma_{ij}|\leq t\right]\geq \frac{2}{3} .
\end{equation}

Next, assuming the event in Eq. (\ref{eq:samples_event}) holds, we show that for sufficiently small $t$, sparsity of $\Sigma$ implies sparsity of $S$. In particular, for $t\leq\min\{\frac{\mu -2 R}{2\sqrt{p-r}+1},\frac{\mu - R}{4}\}$, we show that $S$ is $(r, \mu - t,\frac{1}{2}(\mu-t), 2)$-sparse,
and moreover, that $J_\mu(\Sigma) = J_{\mu-t}(S)$.

 Indeed, if $|\Sigma_{kj}|\geq\mu$ then according to Eq. (\ref{eq:samples_event}),$$|S_{kj}|\geq |\Sigma_{kj}|-|\Sigma_{kj}-S_{kj}|\geq\mu - t$$
meaning $J_\mu(\Sigma_k) \subseteq J_{\mu-t}(S_k)$.
Similarly, if $|S_{kj}|\geq \mu - t$ then
$$|\Sigma_{kj}|\geq |S_{kj}|-|\Sigma_{kj}-S_{kj}|\geq\mu -2 t\geq\mu -\frac{\mu - R}{2}=\frac{\mu + R}{2}> R.$$
Since every element of $\Sigma$ (in absolute value) is either larger than $\mu$ or smaller than $R$, the condition \(|S_{kj}|>\mu-t\) thus implies that $|\Sigma_{kj}|\geq \mu$ and hence $J_\mu(\Sigma_k) \supseteq\ J_{\mu-t}(S_k)$.
Combining these two results above yields that w.h.p. $J_\mu(\Sigma_k) = J_{\mu-t}(S_k)$,  and entries of \(S\) are large if and only if the corresponding entries of  $\Sigma$ are  large.

As for the gap,
\begin{align*}
\sum_{j\not\in J_{\mu-t}(S_k)}(S_{kj})^2\ &\leq   \sum_{j\not\in J_{\mu-t}(S_k)}(|\Sigma_{kj}|+t)^2=
\sum_{j\not\in J_{\mu}(\Sigma_k)}(|\Sigma_{kj}|+t)^2
\\
& \leq \sum_{j\not\in J_{\mu}(\Sigma_k)}(\Sigma_{kj})^2+\sum_{j\not\in J_{\mu}(\Sigma_k)}t^2+2\sum_{j\not\in J_{\mu}(\Sigma_k)}|\Sigma_{kj}|t\\
&\leq R^2+(p-r)t^2+2t\sqrt{p-r}R=(R + \sqrt{p-r}t)^2.
\end{align*}
For $t \leq \frac{\mu-2R}{2\sqrt{p-r}+1}$, we obtain $$\sum_{j\not\in J_{\mu-t}(S_{k})}(S_{kj})^2\leq (\frac{1}{2}(\mu-t))^2$$
which implies that $S$ is $(r, \mu-t, \frac{1}{2}(\mu-t),2)$-sparse.

\end{proof}

\subsection{Proof of Theorem \ref{thm:assym_num_of_samples}}
\begin{proof}
For any $n$ and $p$, let $\beta(n,p)$ be s.t. $n=\frac{2K^4}{ct^2}\log (\beta(n,p)p^2)+1$.
Notice that for $n$ and $p$ large enough, $\beta(n,p)>0$. By applying a union-bound argument on Eq. (\ref{eq:samples_HW_ineq}) we obtain, instead of Eq. (\ref{eq:samples_event}),
$$ \Pr\left[\forall i,j: |S_{ij}-\Sigma_{ij}|\leq t\right]\geq 1-\frac{1}{\beta(n,p)}. $$
Clearly as $n\rightarrow \infty$ with  $\frac{p\log p}{n}\rightarrow 0$, $\beta(n,p)\rightarrow\infty$, which  implies that this probability converges to one.

\end{proof}

\section*{Acknowledgments}

We thank Robert Krauthgamer and Ohad Shamir for interesting discussions.
We also thank Parikshit Ram and Alexander Gray for providing us with the code of their work and for various additional references.
Finally, we also thank Rasmus Pagh and Anshumali Shrivastava for several references. This work was partially supported by a grant from the Intel
Collaborative Research Institute for Computational Intelligence (ICRI-CI).


\clearpage
\bibliographystyle{unsrt}
\bibliography{bibfile}

\begin{thebibliography}{10}

\bibitem{butte_2000}
Atul~J Butte, Pablo Tamayo, Donna Slonim, Todd~R Golub, and Isaac~S Kohane.
\newblock Discovering functional relationships between {RNA} expression and
  chemotherapeutic susceptibility using relevance networks.
\newblock {\em Proceedings of the National Academy of Sciences},
  97(22):12182--12186, 2000.

\bibitem{Nearly_Optimal_sFFT}
Haitham Hassanieh, Piotr Indyk, Dina Katabi, and Eric Price.
\newblock Nearly optimal sparse {F}ourier transform.
\newblock In {\em Proceedings of the forty-fourth annual ACM symposium on
  Theory of computing}, pages 563--578. ACM, 2012.

\bibitem{xiao_2011_PPJOIN}
Chuan Xiao, Wei Wang, Xuemin Lin, Jeffrey~Xu Yu, and Guoren Wang.
\newblock Efficient similarity joins for near-duplicate detection.
\newblock {\em ACM Transactions on Database Systems (TODS)}, 36(3):15, 2011.

\bibitem{Bickel_Levina_2008}
Peter~J Bickel and Elizaveta Levina.
\newblock Covariance regularization by thresholding.
\newblock {\em The Annals of Statistics}, pages 2577--2604, 2008.

\bibitem{Cai_Liu_2011}
Tony Cai and Weidong Liu.
\newblock Adaptive thresholding for sparse covariance matrix estimation.
\newblock {\em Journal of the American Statistical Association},
  106(494):672--684, 2011.

\bibitem{Karoui_2008}
Noureddine El~Karoui.
\newblock Operator norm consistent estimation of large-dimensional sparse
  covariance matrices.
\newblock {\em The Annals of Statistics}, 36(6):2717--2756, 2008.

\bibitem{Bien_2011}
Jacob Bien and Robert~J. Tibshirani.
\newblock Sparse estimation of a covariance matrix.
\newblock {\em Biometrika}, 98:807--820, 2011.

\bibitem{Chaudhur_07}
Sanjay Chaudhuri, Mathias Drton, and Thomas~S. Richardson.
\newblock Estimation of a covariance matirx with zeros.
\newblock {\em Biometrika}, 94:199--216, 2007.

\bibitem{karoui_2008_spectrum}
Noureddine El~Karoui.
\newblock Spectrum estimation for large dimensional covariance matrices using
  random matrix theory.
\newblock {\em The Annals of Statistics}, pages 2757--2790, 2008.

\bibitem{williams_2012_multiplying}
Virginia~V. Williams.
\newblock Multiplying matrices faster than {C}oppersmith-{W}inograd.
\newblock In {\em Proceedings of the forty-fourth annual ACM symposium on
  Theory of computing}, pages 887--898. ACM, 2012.

\bibitem{drineas_2006}
Petros Drineas, Ravi Kannan, and Michael~W Mahoney.
\newblock Fast monte carlo algorithms for matrices i: Approximating matrix
  multiplication.
\newblock {\em SIAM Journal on Computing}, 36:132--157, 2006.

\bibitem{iwen_spencer_2009}
Mark~A Iwen and Craig~V Spencer.
\newblock A note on compressed sensing and the complexity of matrix
  multiplication.
\newblock {\em Information Processing Letters}, 2009.

\bibitem{pagh_2013}
Rasmus Pagh.
\newblock Compressed matrix multiplication.
\newblock {\em ACM Transactions on Computation Theory (TOCT)}, 2013.

\bibitem{ram_2012}
Parikshit Ram and Alexander~G Gray.
\newblock Maximum inner-product search using cone trees.
\newblock In {\em Proceedings of the 18th ACM SIGKDD international conference
  on Knowledge discovery and data mining}, pages 931--939. ACM, 2012.

\bibitem{shrivastava_2014}
Anshumali Shrivastava and Ping Li.
\newblock Asymmetric lsh (alsh) for sublinear time maximum inner product search
  (mips).
\newblock In {\em Advances in Neural Information Processing Systems}, pages
  2321--2329, 2014.

\bibitem{shakhnarovich_book_2006}
Gregory Shakhnarovich, Piotr Indyk, and Trevor Darrell.
\newblock {\em Nearest-neighbor methods in learning and vision: theory and
  practice}.
\newblock MIT Press, 2006.

\bibitem{osipov_rokhlin_2013}
Peter~W Jones, Andrei Osipov, and Vladimir Rokhlin.
\newblock A randomized approximate nearest neighbors algorithm.
\newblock {\em Applied and Computational Harmonic Analysis}, 34(3):415--444,
  2013.

\bibitem{arya_1998}
Sunil Arya, David~M Mount, Nathan~S Netanyahu, Ruth Silverman, and Angela~Y Wu.
\newblock An optimal algorithm for approximate nearest neighbor searching in
  fixed dimensions.
\newblock {\em Journal of the ACM}, 45(6):891--923, 1998.

\bibitem{kd_tree}
Jon~L. Bentley.
\newblock Multidimensional binary search trees used for associative searching.
\newblock {\em Communications of the ACM}, 18(9):509--517, 1975.

\bibitem{datar_indyk_2004}
Mayur Datar, Nicole Immorlica, Piotr Indyk, and Vahab~S Mirrokni.
\newblock Locality-sensitive hashing scheme based on p-stable distributions.
\newblock In {\em Proceedings of the twentieth annual symposium on
  Computational geometry}, pages 253--262. ACM, 2004.

\bibitem{peled_indyk_2012}
Sariel Har-Peled, Piotr Indyk, and Rajeev Motwani.
\newblock Approximate nearest neighbor: Towards removing the curse of
  dimensionality.
\newblock {\em Theory of Computing}, 8(1):321--350, 2012.

\bibitem{nowak_2013_sketching}
Gautam Dasarathy, Parikshit Shah, Badri~Narayan Bhaskar, and Robert Nowak.
\newblock Sketching sparse matrices.
\newblock {\em arXiv preprint arXiv:1303.6544}, 2013.

\bibitem{adler_2013}
Thakshila Wimalajeewa, Yonina~C Eldar, and Pramod~K Varshney.
\newblock Recovery of sparse matrices via matrix sketching.
\newblock {\em arXiv preprint arXiv:1311.2448}, 2013.

\bibitem{Bayardo_2007}
Roberto~J Bayardo, Yiming Ma, and Ramakrishnan Srikant.
\newblock Scaling up all pairs similarity search.
\newblock In {\em Proceedings of the 16th international conference on World
  Wide Web}, pages 131--140. ACM, 2007.

\bibitem{charikar_2002}
Moses~S Charikar.
\newblock Similarity estimation techniques from rounding algorithms.
\newblock In {\em Proceedings of the thiry-fourth annual ACM symposium on
  Theory of computing}, pages 380--388. ACM, 2002.

\bibitem{valiant_1988}
Leslie~G. Valiant.
\newblock Functionality in neural nets.
\newblock In {\em Proceedings of the first annual workshop on Computational
  learning theory}, pages 28--39. Morgan Kaufmann Publishers Inc., 1988.

\bibitem{dubiner_2010}
Moshe Dubiner.
\newblock Bucketing coding and information theory for the statistical
  high-dimensional nearest-neighbor problem.
\newblock {\em Information Theory, IEEE Transactions on}, 56(8):4166--4179,
  2010.

\bibitem{valiant_2015}
Gregory Valiant.
\newblock Finding correlations in subquadratic time, with applications to
  learning parities and the closest pair problem.
\newblock {\em Journal of the ACM (JACM)}, 62:13, 2015.

\bibitem{akavia_2010}
Adi Akavia.
\newblock Deterministic sparse {F}ourier approximation via fooling arithmetic
  progressions.
\newblock In {\em COLT}, pages 381--393, 2010.

\bibitem{gilbert_2002}
Anna~C Gilbert, Sudipto Guha, Piotr Indyk, S~Muthukrishnan, and Martin Strauss.
\newblock Near-optimal sparse {F}ourier representations via sampling.
\newblock In {\em Proceedings of the thiry-fourth annual ACM symposium on
  Theory of computing}, pages 152--161. ACM, 2002.

\bibitem{iwen_2010}
Mark~A Iwen.
\newblock Combinatorial sublinear-time {F}ourier algorithms.
\newblock {\em Foundations of Computational Mathematics}, 10(3):303--338, 2010.

\bibitem{vershynin_2010}
Roman Vershynin.
\newblock Introduction to the non-asymptotic analysis of random matrices.
\newblock {\em arXiv preprint arXiv:1011.3027}, 2010.

\bibitem{mlpack}
Ryan~R Curtin, James~R Cline, Neil~P Slagle, William~B March, Parikshit Ram,
  Nishant~A Mehta, and Alexander~G Gray.
\newblock Mlpack: A scalable c++ machine learning library.
\newblock {\em The Journal of Machine Learning Research}, 14(1):801--805, 2013.

\bibitem{LSH_lecture_notes_2008}
Malcolm Slaney and Michael Casey.
\newblock Locality-sensitive hashing for finding nearest neighbors.
\newblock {\em Signal Processing Magazine, IEEE}, 25(2):128--131, 2008.

\bibitem{Shrivastava_Li_2015}
Anshumali Shrivastava and Ping Li.
\newblock Improved asymmetric locality sensitive hashing (alsh) for maximum
  inner product search (mips).
\newblock In {\em Conference on Uncertainty in Artificial Intelligence (UAI)},
  2015.

\bibitem{birnbaum_nadler_2012}
Aharon Birnbaum and Boaz Nadler.
\newblock High dimensional sparse covariance estimation: accurate thresholds
  for the maximal diagonal entry and for the largest correlation coefficient.
\newblock Technical report, Weizmann Institute of Science, 2012.

\bibitem{Johnstone_2009}
Iain~M Johnstone and Arthur~Yu Lu.
\newblock On consistency and sparsity for principal components analysis in high
  dimensions.
\newblock {\em Journal of the American Statistical Association}, 104(486),
  2009.

\bibitem{Laurent_Massart_2000}
B.~Laurent and P.~Massart.
\newblock Estimation of a quadratic functional by model selection.
\newblock {\em Annals of Statistics}, 28:1302--1338`, 2000.

\bibitem{rudelson_2013}
Mark Rudelson and Roman Vershynin.
\newblock {Hanson-Wright} inequality and sub-gaussian concentration.
\newblock {\em arXiv preprint arXiv:1306.2872}, 2013.

\bibitem{Boucheron_Lugosi_Massart_2013}
St{\'e}phane Boucheron, G{\'a}bor Lugosi, and Pascal Massart.
\newblock {\em Concentration inequalities: A nonasymptotic theory of
  independence}.
\newblock OUP Oxford, 2013.

\end{thebibliography}
%


\end{document}